\theoremstyle{plain}
\newtheorem{proposition}{Proposition}
\newtheorem{lemma}{Lemma}
\newtheorem{theorem}{Theorem}
\newtheorem{assumption}{Assumption}
\newtheorem{definition}{Definition}
\newtheorem{remark}{Remark}
\def\bmh{{\bm h}}
\def\bmdelta{{\bm \delta}}
\newcounter{mnotecount}
\newcommand{\mnotex}[1]
{\protect{\stepcounter{mnotecount}}$^{\mbox{\footnotesize $\bullet$\themnotecount}}$ 
\marginpar{
\raggedright\tiny\em
$\!\!\!\!\!\!\,\bullet$\themnotecount: #1} }
\begin{document}


\title{\textbf{Dain's invariant on non-time symmetric initial data sets}}

\author[,1]{J.A. Valiente Kroon \footnote{E-mail address:{\tt
      j.a.valiente-kroon@qmul.ac.uk}}}
\author[,1]{J.L. Williams \footnote{E-mail address:{\tt j.l.williams@qmul.ac.uk}}}
\affil[1]{School of Mathematical Sciences, Queen Mary, University of London,
Mile End Road, London E1 4NS, United Kingdom.}

\maketitle
\begin{abstract}
We extend Dain's construction of a geometric invariant characterising
static initial data sets for the vacuum Einstein field equations to
situations with a non-vanishing
extrinsic curvature. This invariant gives a measure of how much an
initial data set with non-vanishing ADM 4-momentum deviates from stationarity. In
particular, it vanishes if and only if the initial data set is
stationary. Thus, the invariant provides a quantification of the
amount of gravitational radiation contained in the initial data set. 
\end{abstract}


\section{Introduction}

The Cauchy problem for the Einstein field equations is a cornerstone
of  mathematical Relativity. Indeed, the proper rigorous formulation of many of the 
outstanding problems in mathematical Relativity, such as stability of
certain special solutions,  is made
within the framework of the Cauchy problem. Accordingly, one of the challenges in
the construction of the spacetimes by means of the Cauchy problem is
to provide physically relevant initial data for the evolution
equations of General Relativity. 

As it is well know, initial data for the evolution equations of
General Relativity cannot be freely specified ---in order to obtain a
proper solution to the Einstein field equations, the initial data set
has to satisfy the so-called Einstein constraint equations. Solutions
to the Einstein constraint equations have been studied extensively
---see e.g. \cite{BarIse04}. Of particular relevance in this respect
is the question of under which conditions an initial data set gives
rise to a spacetime development possessing Killing symmetries ---this
question first arose in the context of linearisation stability, see
\cite{Mon75}. These conditions are encoded in the so-called
\emph{Killing initial data (KID) equations} ---see
e.g. \cite{Chr91b,BeiChr97b} for a discussion of the basic properties
of these equations; see also \cite{Col77}. The KID equations
constitute a system of overdetermined equations for a scalar and a
vector on the initial hypersurface. The existence of a solution to
these equations is equivalent to the existence of a Killing vector in
the development of the initial data set. The KID equations have a deep
connection with the \emph{Arnowit-Deser-Misner (ADM) evolution
equations}: the evolution equations can be described as a flow
generated by the adjoint linearised constraint map, $D\Phi^*$ (see
below) ---see e.g. \cite{FisMar72} for further details.

In many applications of both physical and mathematical interest it is
important to have a way of quantifying how much a give initial data
set deviates from stationarity. Ideally, one would like to do this in
coordinate-independent manner. One approach to this problem was
proposed in \cite{Dai04c}, in which the notion of an \textit{approximate Killing vector}, as a solution to a fourth-order
linear elliptic system arising from the KID equations, was
introduced. The so-called \textit{approximate Killing vector equation}
has the property that its kernel contains that of the KID
equations. The analysis in \cite{Dai04c} was restricted to the case of
time symmetric asymptotically Euclidean initial data sets. In
particular, it was shown that the kernel of the approximate Killing
vector equation is non-trivial, and moreover that, 
given suitable assumptions on the asymptotics of the initial data set,
the solution (termed the \textit{approximate Killing vector}) is
unique up to constant rescalings.

It is of interest to mention that the general strategy adopted in
\cite{Dai04c} has found applicability in the analysis of spacetimes
admitting a Killing spinor ---see \cite{GarVal08a}. These ideas have
been used, in turn, to obtain an invariant characterising 
initial data sets for the Kerr spacetime, see
\cite{BaeVal10a,BaeVal10b}, and for the Kerr-Newman spacetime, see
\cite{ColVal16b}. 

\medskip
The purpose of this article is to extend Dain's result in
\cite{Dai04c} to the non-time symmetric case. Moreover, we analyse in
some detail conformally flat initial data sets as way of obtaining some
further insight into Dain's construction. Our main result is Theorem
\ref{MainResult} which shows that the approximate Killing
vector equation  can be solved with the required asymptotic conditions
for a large class of asymptotically Euclidean initial data sets. 

\subsection*{Overview of the article}
This article is structured as follows: Section
\ref{Section:ApproximateKillingVectorEquation} provides a discussion
of the basic properties of the approximate Killing vector equation as
introduced by Dain. In particular, Subsection
\ref{Subsection:KIDEquations} provides a discussion of the relation
between the Einstein constraint equations and the so-called Killing
Initial Data (KID) equations; Subsection
\ref{Subsection:BasicProperties} provides a detailed discussion of the
approximate Killing vector equation in the non-time symmetric setting;
Subsection \ref{Subsection:IntegrationParts} introduces some useful
identities which will be used throughout. Section
\ref{Section:AsymptoticallyEuclidean} analyses the solvability of the
approximate Killing equation on asymptotically Euclidean manifolds: in
Subsection \ref{Section:SobolevSpaces} some basic background on
weighted Sobolev spaces is given; Subsection
\ref{Section:DecayAssumptions} provides a discussion of our main
asymptotic decay assumptions and of the asymptotic behaviour of
solutions to the KID equations; Subsection \ref{Section:Ellipticity}
briefly reviews the basic methods to analyse the existence of
solutions to elliptic equations on asymptotically Euclidean manifolds;
Subsection \ref{Section:ExistenceSolutions} contains our main
existence results. Finally, Section \ref{Section:FurtherAnalysis} contains a further discussion of
the geometric invariant obtained from Dain's construction with
particular emphasis to the case of conformally flat initial data sets.

\subsection*{Notation and conventions}
We use Penrose's abstract index notation throughout so that ${}_i, \,
{}_j, \, {}_k, \ldots$ denote abstract 3-dimensional tensorial
indices. The Greek indices ${}_\alpha, \,{}_\beta, \, {}_\gamma,
\ldots$ denote 3-dimensional coordinate indices. Riemannian
3-dimensional metrics are assumed to have signature $(+++)$. Our
conventions for the curvature are fixed by 
\[
D_i D_j v^k -D_j D_i v^k = r^k{}_{lij} v^l.
\]
The Ricci $r_{lj}$ tensor is obtained from the Riemann tensor via the
relation
\[
r_{lj} \equiv r^i{}_{lij}.
\]

\section{The approximate Killing vector equation}
\label{Section:ApproximateKillingVectorEquation}

In this section  we introduce the basic objects of our analysis: the
vacuum Einstein constraint equations, the Killing initial data
equations and the approximate Killing initial data equations. 

\subsection{The Einstein constraints and the KID equations}
\label{Subsection:KIDEquations}
In this article we will study properties of initial data sets for the vacuum
Einstein field equations ---that is, triples
$(\mathcal{S},h_{ij},K_{ij})$ where $\mathcal{S}$ is a 3-dimensional
manifold, $h_{ij}$ is a Riemannian metric on $\mathcal{S}$ and $K_{ij}$
is a symmetric rank 2 tensor satisfying the vacuum Einstein constraint
equations
\begin{subequations}
\begin{eqnarray}
&& r+ K^2 -K_{ij} K^{ij} =0, \label{HamiltonianConstraint}\\
&& D^j K_{ij} - D_i K=0. \label{MomentumConstraint}
\end{eqnarray}
\end{subequations}
Following the standard conventions we refer to equations
\eqref{HamiltonianConstraint} and \eqref{MomentumConstraint} as the
\emph{Hamiltonian} and \emph{momentum} constraints, respectively. In
the above expressions $D_i$ denotes the Levi-Civita connection of the
metric $h_{ij}$ and $r$ is the associated Ricci scalar. Furthermore,
$K\equiv K_{ij} h^{ij}$. 

\medskip
In the following we will be particularly interested on initial data
sets $(\mathcal{S},h_{ij},K_{ij})$ whose development has a Killing
vector. The conditions for this to be case are identified in the following:

\begin{proposition}
Let $(\mathcal{S},h_{ij},K_{ij})$ denote an initial data set for the
vacuum Einstein field equations. If there exists and scalar field $N$ and a
vector field $Y^i$ over $\mathcal{S}$ satisfying the equations
\begin{subequations}
\begin{eqnarray}
&& L_{ij}\equiv N K_{ij} + D_{(i} Y_{j)}=0, \label{KID1}\\
&& M_{ij}\equiv Y^k D_k K_{ij} + D_i Y^k K_{kj} + D_j Y^k K_{ik} + D_i
D_j N \nonumber \\
&& \hspace{4cm} -N (r_{ij} + K K_{ij} -2 K_{ik} K^k{}_j)=0, \label{KID2}
\end{eqnarray}
\end{subequations}
then the development of the initial data is endowed with a Killing vector.
\end{proposition}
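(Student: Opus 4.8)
The plan is to realise the pair $(N, Y^i)$ as the lapse and shift of a spacetime vector field $\xi^\mu$ on the development $(\mathcal{M}, g_{\mu\nu})$ of the data, and to show that $L_{ij}=0$ and $M_{ij}=0$ force $\xi^\mu$ to satisfy the Killing equation $\nabla_{(\mu}\xi_{\nu)}=0$ throughout $\mathcal{M}$; here $\mu,\nu,\ldots$ denote spacetime indices and $\nabla$ the Levi--Civita connection of $g_{\mu\nu}$. Let $n^\mu$ be the future-directed unit normal to the embedded hypersurface $\mathcal{S}$ (so $n_\mu n^\mu=-1$), let $h_{\mu\nu}=g_{\mu\nu}+n_\mu n_\nu$ be the induced metric and projector, and set
\[
\xi^\mu\big|_{\mathcal{S}}\equiv N n^\mu + Y^\mu ,
\]
with $Y^\mu$ the tangential lift of $Y^i$. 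The argument proceeds in three stages: (i) identify the projections of the deformation tensor $S_{\mu\nu}\equiv\nabla_{(\mu}\xi_{\nu)}$ on $\mathcal{S}$, and its first normal derivative, with $L_{ij}$, $M_{ij}$ and the constraints; (ii) propagate $\xi^\mu$ off $\mathcal{S}$ by a wave equation; (iii) conclude $S_{\mu\nu}\equiv 0$ by uniqueness.

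First I would compute the tangential--tangential projection of $S_{\mu\nu}$ on $\mathcal{S}$. Using $K_{\mu\nu}=h_\mu{}^\alpha h_\nu{}^\beta\nabla_\alpha n_\beta$ and $h_\mu{}^\alpha h_\nu{}^\beta\nabla_\alpha Y_\beta = D_\mu Y_\nu$ for tangential $Y$, the projection $h_\mu{}^\alpha h_\nu{}^\beta S_{\alpha\beta}$ reduces exactly to $N K_{ij}+D_{(i}Y_{j)}=L_{ij}$. Thus \eqref{KID1} is precisely the vanishing of the tangential part of $S_{\mu\nu}$. The mixed and normal projections of $S_{\mu\nu}$ involve the as-yet-unfixed normal derivative $\nabla_n\xi^\mu$, and I would use this freedom to prescribe the Cauchy velocity of $\xi^\mu$ so that these projections also vanish on $\mathcal{S}$, giving $S_{\mu\nu}|_{\mathcal{S}}=0$. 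The second-order data is then governed by \eqref{KID2}: a computation of $\nabla_n S_{\mu\nu}|_{\mathcal{S}}$, in which the Gauss--Codazzi relations and the defining relation $K_{ij}=-\tfrac{1}{2}\mathcal{L}_n h_{ij}$ (up to sign convention) are used to trade normal derivatives for spatial ones, shows that $\nabla_n S_{\mu\nu}|_{\mathcal{S}}$ is a linear combination of $L_{ij}$, $M_{ij}$, $D_k L_{ij}$ and the Hamiltonian and momentum constraints \eqref{HamiltonianConstraint}--\eqref{MomentumConstraint}. Consequently $L_{ij}=M_{ij}=0$, together with the constraints, yields both $S_{\mu\nu}|_{\mathcal{S}}=0$ and $\nabla_n S_{\mu\nu}|_{\mathcal{S}}=0$.

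Next I would propagate. In vacuum $R_{\mu\nu}=0$, so I extend $\xi^\mu$ to a neighbourhood of $\mathcal{S}$ as the solution of the wave equation $\Box\xi^\mu=0$ with the Cauchy data fixed above. Commuting covariant derivatives and invoking the contracted second Bianchi identity together with $R_{\mu\nu}=0$, one shows that the deformation tensor obeys a homogeneous linear wave equation of the schematic form
\[
\Box S_{\mu\nu}=\mathcal{R}_{\mu\nu}{}^{\alpha\beta}\,S_{\alpha\beta},
\]
whose right-hand side is algebraic in $S_{\mu\nu}$ with coefficients built from the Riemann tensor, and which carries no inhomogeneous source. Since $S_{\mu\nu}$ and $\nabla_n S_{\mu\nu}$ both vanish on $\mathcal{S}$, uniqueness for linear second-order hyperbolic systems forces $S_{\mu\nu}\equiv 0$ on the domain of dependence of $\mathcal{S}$, and hence on the whole development. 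Therefore $\xi^\mu$ is a Killing vector.

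The main obstacle is the middle step: the careful $3{+}1$ decomposition establishing that $S_{\mu\nu}|_{\mathcal{S}}$ and $\nabla_n S_{\mu\nu}|_{\mathcal{S}}$ are algebraic-differential combinations of $L_{ij}$, $M_{ij}$ and the constraints. This is exactly where the precise structure of \eqref{KID2} matters: the curvature term $N(r_{ij}+KK_{ij}-2K_{ik}K^k{}_j)$ is recognised as the vacuum value of the right-hand side of the ADM evolution equation for $K_{ij}$, so that $M_{ij}=0$ encodes the preservation of the extrinsic curvature along $\xi^\mu$, i.e. the second-order Killing data, while the vacuum constraints are what eliminate the residual mixed and normal--normal components. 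Once this reduction is in place, stages (ii) and (iii) are standard, resting only on the source-free wave equation for $S_{\mu\nu}$ and the vanishing of its Cauchy data.
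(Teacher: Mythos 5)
Your proposal is correct and matches, in essentials, the proof the paper relies on: the paper itself gives no proof of this proposition, deferring to \cite{Chr91b,BeiChr97b}, and the argument there is precisely the one you outline---realise $(N,Y^i)$ as lapse and shift of a candidate field $\xi^\mu$, extend $\xi^\mu$ off $\mathcal{S}$ by the vacuum-consistent wave equation $\Box\xi^\mu=0$, show that the KID equations \eqref{KID1}--\eqref{KID2} together with the constraints \eqref{HamiltonianConstraint}--\eqref{MomentumConstraint} force the Cauchy data of the deformation tensor $S_{\mu\nu}=\nabla_{(\mu}\xi_{\nu)}$ to vanish, and conclude $S_{\mu\nu}\equiv 0$ from the homogeneous linear wave equation it satisfies in vacuum. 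The one point left implicit---that the equation $\Box\xi^\mu=0$ itself is what eliminates the second normal derivatives $\nabla_n\nabla_n\xi^\mu$ when verifying $\nabla_n S_{\mu\nu}|_{\mathcal{S}}=0$---is subsumed in your ``trade normal derivatives for spatial ones'' step and does not affect correctness.
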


A proof of this result can be found in e.g. \cite{Chr91b} ---see also
\cite{BeiChr97b}.   

\begin{remark}
{\em The pair $(N,Y^i)$ is called a \emph{Killing initial data set (KID)} and
equations \eqref{KID1}-\eqref{KID2} are known as the \emph{KID equations}.}
\end{remark}

It is interesting to note that Killing initial data for conformally rescaled 
vacuum spacetimes has been analysed
in \cite{Paetz14,Paetz16}, with applications to the characterisation of 
Kerr-de Sitter-like spacetimes in \cite{Mars16}.

\subsection{Basic properties of the approximate Killing vector
  equation}
\label{Subsection:BasicProperties}
In the following, denote by $\mathscr{M}_2$, $\mathscr{S}_2$,
$\mathscr{X}$ and $\mathscr{C}$ the spaces of Riemannian metrics,
symmetric 2-tensors, vectors and scalar functions on the 3-dimensional
manifold $\mathcal{S}$, respectively. It is convenient to write the Einstein constraint
equations \eqref{HamiltonianConstraint} and \eqref{MomentumConstraint}
in terms of a map (\emph{the constraint operator})
\[
\Phi: \mathscr{M}_2\times \mathscr{S}_2 \rightarrow \mathscr{C}\times \mathscr{X}
\]
such that for  $h_{ij}\in \mathscr{M}_2$, $K_{ij}\in \mathscr{S}_2$
one has
\[
\Phi\left(\begin{array}{c}
 h_{ij}\\
 K_{ij}
\end{array}\right)\equiv\left(\begin{array}{c}
r+K^2-K_{ij}K^{ij}\\
-D^jK_{ij}+D_iK
\end{array}\right).
\]
In terms of the latter, the constraints  \eqref{HamiltonianConstraint}
and \eqref{MomentumConstraint} take the form 
\[
\Phi \left(\begin{array}{c}
 h_{ij}\\
 K_{ij}
\end{array}\right)=0.
\]
The \emph{linearisation of the constraint operator} $\Phi$, $D\Phi:
\mathscr{S}_2\times \mathscr{S}_2\rightarrow \mathscr{C}\times \mathscr{X}$,
evaluated at $(h_{ij},K_{ij})$ can be found to be given by
\[
D\Phi\left(\begin{array}{c}
\gamma_{ij}\\
Q_{ij}
\end{array}\right)=\left(\begin{array}{c}
D^iD^j\gamma_{ij}-r_{ij}\gamma^{ij}-\Delta_\bmh \gamma+H\\
-D^jQ_{ij}+D_iQ-F_i
\end{array}\right)
\]
where $\gamma\equiv h^{ij}\gamma_{ij}$, $Q\equiv h^{ij}Q_{ij}$ and 
\begin{align*}
& H\equiv 2(KQ-K^{ij}Q_{ij})+2(K^{ki}K^j{}_k-KK^{ij})\gamma_{ij},\\
& F_i\equiv \left(D_i K^{kj}-D^k K^j{}_i\right)\gamma_{jk}-\left(K^k{}_i D^j-\tfrac{1}{2}K^{kj}D_i\right)\gamma_{jk}+\tfrac{1}{2}K^k{}_i D_k \gamma,
\end{align*}
while $\Delta_\bmh\equiv h^{ij}D_i D_j$ is the Laplacian of the metric
$h_{ij}$. Moreover, using integration by parts, the formal adjoint of
 the linearised constraint operator,
 $D\Phi^*:\mathscr{C}\times\mathscr{X}\rightarrow \mathscr{S}_2\times \mathscr{S}_2$, can be seen to be given by 
\[
D\Phi^*\left(\begin{array}{c}
X\\
X_i
\end{array}\right)=\left(\begin{array}{c}
D_iD_jX-X r_{ij}-\Delta_h X h_{ij}+H_{ij}\\
D_{(i}X_{j)}-D^kX_k h_{ij}+F_{ij}
\end{array}\right)\]
where 
\begin{eqnarray*}
&& H_{ij}\equiv
  2X(K^k{}_iK_{jk}-KK_{ij})-K_{k(i}D_{j)}X^k+\tfrac{1}{2}K_{ij}D_kX^k\\
&& \hspace{2cm}+\tfrac{1}{2}K_{kl}D^kX^l h_{ij}-\tfrac{1}{2}X^kD_k K_{ij}+\tfrac{1}{2}X^k D_k K h_{ij},\\
&& F_{ij}\equiv 2X(Kh_{ij}-K_{ij}).
\end{eqnarray*}
Note that in the case of time-symmetric data, $H=F_i=H_{ij}=F_{ij}\equiv 0$, and the above expressions for $D\Phi$ and $D\Phi^*$ thereby reduce to those given in \cite{Dai04c}.

\begin{remark}
\label{Rmk:EquivalenceOfKIDsandDPhiStar}
{\em A calculation shows that $D\Phi^*=0$ is equivalent to the KID equations
\eqref{KID1}-\eqref{KID2}. Indeed, one has that 
\[D\Phi^*\left(\begin{array}{c}
N\\
-2Y_i
\end{array}\right)=\left(\begin{array}{c}
M_{ij}-M_k{}^k h_{ij}-\tfrac{1}{2}K_{kl}L^{kl} h_{ij}+\tfrac{1}{2}K_{ij}L_k{}^k\\
L_{ij}-L_k{}^k h_{ij}
\end{array}\right)
\]
from which we see that $D\Phi^*(N,-2Y_i)=0$ if and only if $L_{ij}=M_{ij}=0$ ---i.e. if and only if $(N,Y^i)$ satisfy the KID equations.}
\end{remark}

\medskip
Now, let $\mathscr{S}_{1,2}$ denote
the space of covariant rank-3 tensors which are symmetric in the last
two indices. Following Dain \cite{Dai04c}, we consider an operator
$\mathcal{P}:\mathscr{S}_2\times \mathscr{S}_{1,2}\rightarrow
\mathscr{C}\times \mathscr{X}$ such that 
\[
\mathcal{P}\left(\begin{array}{c}
\gamma_{ij}\\
q_{kij}
\end{array}\right)\equiv D\Phi\left(\begin{array}{c}
\gamma_{ij}\\
-D^kq_{kij}
\end{array}\right)
\]
with formal adjoint, $\mathcal{P}^*:\mathscr{C}\times
\mathscr{X}\rightarrow \mathscr{S}_2\times \mathscr{S}_{1,2}$, given by
\[
\mathcal{P}^*\left(\begin{array}{c}
X\\
X_i
\end{array}\right)\equiv \left(\begin{array}{cc}
1 & 0 \\
0 & D_k
\end{array}\right)\cdot D\Phi^*\left(\begin{array}{c}
X\\
X_i
\end{array}\right)=\left(\begin{array}{c}
D_iD_jX-X r_{ij}-\Delta_h X h_{ij}+H_{ij}\\
D_k(D_{(i}X_{j)}-D^lX_l h_{ij}+F_{ij})
\end{array}\right).
\]
Further, we consider the composition
$\mathcal{P}\circ\mathcal{P}^*:\mathscr{C}\times\mathscr{X}\rightarrow \mathscr{C}\times \mathscr{X}$, given by
\begin{equation*}
\hspace{-5mm}\mathcal{P}\circ\mathcal{P}^*\left(\begin{array}{c}
X\\
X_i
\end{array}\right) \equiv \left(\begin{array}{c}
2\Delta_\bmh\Delta_\bmh X-r^{ij}D_iD_jX+2r\Delta_\bmh X+\tfrac{3}{2}D^i rD_iX+(\tfrac{1}{2}\Delta_\bmh r+r_{ij}r^{ij})X\\
+D^iD^j H_{ij}-\Delta_\bmh H_k{}^k-r^{ij}H_{ij}+\bar{H} \\[1em]
D^j\Delta_\bmh D_{(i}X_{j)}+D_i\Delta_\bmh D^kX_k+D^j\Delta_\bmh F_{ij}-D_i\Delta_\bmh F_k{}^k-\bar{F}_i
\end{array}\right)
\end{equation*}
where
\begin{align*}
& \bar{H}\equiv 2(K\bar{Q}-K^{ij}\bar{Q}_{ij})+2(K^{ki}K^j{}_k-KK^{ij})\bar{\gamma}_{ij},\\
& \bar{F}_i\equiv \left(D_i K^{kj}-D^k K^j{}_i\right)\bar{\gamma}_{jk}-\left(K^k{}_i D^j-\tfrac{1}{2}K^{kj}D_i\right)\bar{\gamma}_{jk}+\tfrac{1}{2}K^k{}_i D_k \bar{\gamma}\\
&\bar{\gamma}_{ij}\equiv D_iD_jX-X r_{ij}-\Delta_\bmh X h_{ij}+H_{ij}\\
&\bar{Q}_{ij}\equiv -\Delta_\bmh(D_{(i}X_{j)}-D^kX_k h_{ij}+F_{ij})
\end{align*}
and $F_{ij},~H_{ij}$ as above. 
\medskip
One has the following:

\begin{lemma}
The operator
$\mathcal{P}\circ\mathcal{P}^*:\mathscr{C}\times\mathscr{X}\rightarrow
\mathscr{C}\times \mathscr{X}$ as defined above is a self-adjoint fourth
order elliptic operator.
\end{lemma}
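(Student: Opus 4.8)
The plan is to treat the three asserted properties---self-adjointness, order four, and ellipticity---in turn, with essentially all the work concentrated in the symbol computation for ellipticity. Self-adjointness is formal and requires almost nothing: first I would confirm that $\mathcal{P}^*$, as displayed, genuinely is the formal $L^2$-adjoint of $\mathcal{P}$. This is immediate from the fact that $D\Phi^*$ is the formal adjoint of $D\Phi$. Pairing $\mathcal{P}(\gamma_{ij},q_{kij})=D\Phi(\gamma_{ij},-D^kq_{kij})$ against $(X,X_i)$ and moving $D\Phi$ across produces a pairing of $(\gamma_{ij},-D^kq_{kij})$ against $D\Phi^*(X,X_i)$; a single integration by parts transfers the divergence $D^k$ onto the second slot of $D\Phi^*(X,X_i)$, yielding precisely $\left(\begin{smallmatrix}1&0\\0&D_k\end{smallmatrix}\right)\cdot D\Phi^*$. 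Once $\mathcal{P}^*$ is known to be the adjoint of $\mathcal{P}$, the composition $\mathcal{P}\circ\mathcal{P}^*$ is formally self-adjoint by the general identity $(AB)^*=B^*A^*$ applied with $A=\mathcal{P}$, $B=\mathcal{P}^*$ and $(\mathcal{P}^*)^*=\mathcal{P}$.

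For the order I would simply count derivatives in the two factors. Inspecting $\mathcal{P}^*$, the top-order terms are $D_iD_jX-\Delta_\bmh X\, h_{ij}$ in the scalar component (second order in $X$) and $D_k(D_{(i}X_{j)}-D^lX_l h_{ij})$ in the tensor component (second order in $X_i$), while the terms built from $H_{ij}$ and $F_{ij}$ are of strictly lower order. Hence $\mathcal{P}^*$ is a second-order operator, and by the analogous inspection so is $\mathcal{P}$. Their composition is therefore at most fourth order; that it is genuinely fourth order will follow from the non-vanishing of the principal symbol established below.

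The heart of the argument is ellipticity. I would use two standard facts: principal symbols are multiplicative under composition, and the principal symbol of a formal adjoint is the transpose of the principal symbol (the sign $(-1)^2=1$ being fixed by the even order and real coefficients). Writing $A(\xi)\equiv\sigma_2(\mathcal{P})(\xi)$, these give $\sigma_4(\mathcal{P}\circ\mathcal{P}^*)(\xi)=A(\xi)A(\xi)^*$, a non-negative self-adjoint endomorphism of $\mathscr{C}\times\mathscr{X}$ for each $\xi$. Such an endomorphism is invertible precisely when $A(\xi)^*=\sigma_2(\mathcal{P}^*)(\xi)$ is injective, so it suffices to prove that the principal symbol of $\mathcal{P}^*$ has trivial kernel for every $\xi\neq0$. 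From the top-order terms above, this symbol sends $(X,X_i)$ to the pair with scalar part proportional to $(\xi_i\xi_j-|\xi|^2h_{ij})X$ and tensor part proportional to $\xi_k(\xi_{(i}X_{j)}-(\xi\cdot X)h_{ij})$; the lower-order pieces $H_{ij},F_{ij}$ do not enter. Setting both to zero, contracting the scalar equation with $h^{ij}$ forces $-2|\xi|^2X=0$, so $X=0$; for the tensor equation, since $\xi\neq0$ one gets $\xi_{(i}X_{j)}-(\xi\cdot X)h_{ij}=0$, and contracting first with $h^{ij}$ gives $\xi\cdot X=0$ and then with $\xi^i$ yields $|\xi|^2X_j=0$, so $X_i=0$. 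Thus $\sigma_2(\mathcal{P}^*)(\xi)$ is injective, $\sigma_4(\mathcal{P}\circ\mathcal{P}^*)(\xi)$ is positive definite for $\xi\neq0$, and $\mathcal{P}\circ\mathcal{P}^*$ is both elliptic and genuinely of fourth order.

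The main obstacle is purely computational: correctly isolating the principal (second-order) symbol of $\mathcal{P}^*$ and confirming that the algebraically involved lower-order tensors $H_{ij}$ and $F_{ij}$ contribute nothing to it. Once this is secured, the injectivity check decouples into the standard symbol of the trace-reversed Hessian on scalars and that of the (trace-modified) conformal Killing operator on vectors, both of which are elementary linear algebra in $\xi$; the self-adjointness and order claims then require no further effort.
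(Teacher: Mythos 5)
Your proof is correct; it establishes the same three claims but organizes the ellipticity argument along a different route than the paper. The paper reads the principal symbol of the fourth-order composition $\mathcal{P}\circ\mathcal{P}^*$ directly off the explicit formula displayed just before the lemma, and verifies injectivity of that symbol by contraction tricks (its self-adjointness argument is the same one-line observation you make, and the order-four claim is left implicit). You instead factor $\sigma_4(\mathcal{P}\circ\mathcal{P}^*)(\xi)=A(\xi)A(\xi)^*$ with $A(\xi)=\sigma_2(\mathcal{P})(\xi)$, using multiplicativity of principal symbols and the adjoint-symbol rule, and reduce everything to injectivity of $\sigma_2(\mathcal{P}^*)(\xi)$ --- the trace-modified Hessian symbol acting on $X$ and the trace-modified Killing symbol acting on $X_i$. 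Your route buys three things: the lengthy explicit formula for the composition is never needed; invertibility of the non-negative Gram symbol reduces automatically to $\ker A(\xi)^*=\lbrace 0\rbrace$; and non-degeneracy of the leading symbol simultaneously certifies that the operator is genuinely of fourth order, a point the paper does not address separately. The paper's route is more concrete, requiring no symbol calculus beyond the definition, but it is also more error-prone: its displayed symbol contains a homogeneity slip (the scalar component reads $2|\xi|^2X$ where degree-four homogeneity requires $2|\xi|^4X$), which does not affect the argument but is exactly the kind of bookkeeping your factorization avoids. The two injectivity computations are otherwise parallel contraction arguments, and both rest on the observation --- explicit in your write-up, implicit in the paper --- that the extrinsic-curvature tensors $H_{ij}$ and $F_{ij}$ are of lower order and therefore never reach the principal symbol.
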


\begin{proof}
The self-adjointness follows from the definition as the operator is
obtained by the composition of an operator and its formal adjoint. To verify
the ellipticity of the operator we notice that the symbol is given by 
\[
\sigma_\xi\left(\begin{array}{c}
X\\
X_i
\end{array}\right)=\left(\begin{array}{c}
2\vert\xi\vert^2X\\
\xi^j\vert\xi\vert^2\xi_{(i}X_{j)}+\xi_i\vert\xi\vert^2 \xi_jX^j
\end{array}
\right)
\]
for $\xi_i$ a covector and $|\xi|^2 \equiv \delta_{ij}\xi^i
\xi^j$. Clearly,  the first component is an isomorphism if
$|\xi|^2\neq 0$. For the second component, contract first with $\xi^i$
to get $2\vert\xi\vert^4 \xi^j X_j=0$ for $X_i$ in the kernel, which
implies $\xi^jX_j=0$. Substituting back into the symbol, one obtains
that $\vert\xi\vert^4 X_i=0$. So, for $|\xi|^2\neq 0$, the symbol is
injective. Clearly the codomain has the same dimension as the domain,
and therefore $\sigma_\xi$ is an isomorphism for $|\xi|^2\neq 0$
---i.e. $\mathcal{P}\circ\mathcal{P}^*$ is fourth-order elliptic operator.
\end{proof}

\medskip
The previous discussion suggests the following:

\begin{definition}
The equation
\begin{equation}
\mathcal{P}\circ\mathcal{P}^*\left(\begin{array}{c}
X\\
X_i
\end{array}\right)=0
\label{ApproximateKIDEquation}
\end{equation}
will be called the approximate Killing initial data (KID) equation and a
solution $(X,X^i)$ thereof an approximate Killing initial data set ---or
approximate KID for brevity.
\end{definition}

\begin{remark}
{\em
As pointed out in \cite{Dai04c}, the equation $\mathcal{P}\circ\mathcal{P}^*(X,X_i)=0$ is the Euler--Lagrange equation of the action
\begin{equation*}
\int_{\mathcal{U}}\mathcal{P}^*\left(\begin{array}{c}
X\\
X_i
\end{array}\right)\cdot \mathcal{P}^*\left(\begin{array}{c}
X\\
X_i
\end{array}\right)~\mbox{d}\mu
\end{equation*}
Note that, had we used the operator $D\Phi^*$ rather than $\mathcal{P}^*$, then the pointwise norm defined by the integrand would contain terms of inconsistent physical dimension: $\left[ X\right]=L^{-2}, ~\left[X_i\right]=1$, and so for instance $\left[(D_i D_j X)(D^i D^j X)\right]=L^{-6}$, while $\left[D_{(i}X_{j)}D^{(i} X^{j)}\right]=L^{-4}$.}
\end{remark}

\subsection{Integration by parts identities}
\label{Subsection:IntegrationParts}
The expressions in the previous subsection and several of our
arguments in latter parts are based on integration by parts. For quick
reference, in this subsection we provide the integral expressions
relating the operators $\mathcal{P}$ and $\mathcal{P}^*$ including
boundary terms.

\medskip
Let $\mathcal{U}\subset \mathcal{S}$ denote a compact set with
boundary $\partial \mathcal{U}$. Recall that by definition 
\begin{eqnarray*}
 \int_{\mathcal{U}} 
\left( 
\begin{array}{c}
X \\
X^i
\end{array}
 \right)\cdot
\mathcal{P}
\left( 
\begin{array}{c}
\gamma_{ij} \\
q_{kij}
\end{array}
\right) \mbox{d}\mu &=& \int_{\mathcal{U}}  
\left( 
\begin{array}{c}
X \\
X^i
\end{array}
 \right)\cdot
D\Phi
\left( 
\begin{array}{c}
\gamma_{ij} \\
-D^k q_{kij}
\end{array}
\right) \mbox{d}\mu \\
& = & \int_{\mathcal{U}}  
\left( 
\begin{array}{c}
X \\
X^i
\end{array}
 \right)\cdot
\left( 
\begin{array}{c}
D^iD^j \gamma_{ij} - r_{ij} \gamma^{ij} -\Delta_\bmh \gamma + H  \\
D^j D^k q_{kij} -D_i D^k q_{kj}{}^j - F_i  
\end{array}
\right) \mbox{d}\mu\\
& = & \int_{\mathcal{U}} X \left(  D^iD^j \gamma_{ij} - r_{ij}
      \gamma^{ij} -\Delta_\bmh \gamma + H\right) \mbox{d}\mu \\
&& \hspace{2cm} +
      \int_{\mathcal{U}} X^i \left( D^j D^k q_{kij} -D_i D^k
   q_{kj}{}^j - F_i \right) \mbox{d}\mu\\
& = & J_1 + J_2.
\end{eqnarray*}
We now proceed to use integration by parts on $J_1$ and $J_2$. A
lengthy computation shows that 
\begin{eqnarray*}
J_1 &\equiv &  \int_{\mathcal{U}} X \left(  D^iD^j \gamma_{ij} - r_{ij}
      \gamma^{ij} -\Delta_\bmh \gamma + H\right) \mbox{d}\mu\\
& = & \int_{\mathcal{U}} \gamma_{ij} \left( D^j D^iX -h^{ij} \Delta_\bmh
      X - X r^{ij} + 2(K^{ki}K^j{}_k-KK^{ij}
)\right)\mbox{d}\mu\\
      &&\hspace{1cm}+ \int_{\mathcal{U}} 2q_{kij}(h^{ij}XD^k K +h^{ij}K D^kX-XD^kK^{ij}-K^{ij}D^kX)d\mu \\
      &&\hspace{1cm}+ \oint_{\partial \mathcal{U}} n^k (\mathcal{A}_k+\mathcal{B}_k) \mbox{d}S
\end{eqnarray*}
where the boundary integrands are given by
\begin{eqnarray*}
&& \mathcal{A}_k \equiv X D^j \gamma_{jk} - D^j X \gamma_{jk} - D_k X
\gamma -X D_k \gamma,\\
&& \mathcal{B}_k \equiv 2(K^{ij}q_{kij}-K q_{kj}{}^j)X.
\end{eqnarray*}
Similarly, one finds that
\begin{eqnarray*}
J_2 & \equiv &  \int_{\mathcal{U}} X^i \left( D^j D^k q_{kij} -D_i D^k
   q_{kj}{}^j - F_i \right) \mbox{d}\mu \\
&=& \int_{\mathcal{U}} q_{kij} \left( D^k D^j X^i + D^k D_l
    h^{ij}\right)\mbox{d}\mu \\
&& \hspace{1cm}-\int_{\mathcal{U}} \gamma_{jk}\left(
    (D_i K^{kj} - D^k K^j{}_i)X^i + D^j(X^iK_i{}^k) -
    \tfrac{1}{2}D_i(X^i K^{kj})- \tfrac{1}{2}h^{jk} D_i (X^l K_l{}^i)
    \right)\mbox{d}\mu \\
&& \hspace{1cm}+ \oint_{\partial\mathcal{U}} n^k (\mathcal{C}_k+\mathcal{D}_k) \mbox{d}S 
\end{eqnarray*}
where the boundary integrands are given by
\begin{eqnarray*}
&& \mathcal{C}_k \equiv X^i D^l q_{lik} - D^j X^i q_{kij} + D_i X^i q_{kj}{}^j -
   X_i D^l q_{lj}{}^j, \\
&& \mathcal{D}_k \equiv X^i K_i{}^l \gamma_{kl} -\tfrac{1}{2}X_k K^{lj}
   \gamma_{jl} - \tfrac{1}{2}X^iK_{ik}\gamma. 
\end{eqnarray*}
Putting everything together and after some further manipulations one
finds the identity
\begin{equation}
\int_{\mathcal{U}} 
\left( 
\begin{array}{c}
X \\
X^i
\end{array}
 \right)\cdot
\mathcal{P}
\left( 
\begin{array}{c}
\gamma_{ij} \\
q_{kij}
\end{array}
\right) \mbox{d}\mu =
\int_{\mathcal{U}} \left(
\begin{array}{c}
\gamma^{ij} \\
q^{kij} 
\end{array}\right) \cdot
\mathcal{P}^*
\left(
\begin{array}{c}
X \\
X_i
\end{array}
\right) + \oint_{\partial\mathcal{U}} n^k \big( \mathcal{A}_k + \mathcal{B}_k + \mathcal{C}_k + \mathcal{D}_k \big)
\mbox{d}S.
\label{IntegrationByParts}
\end{equation}

\section{The approximate Killing vector equation on asymptotically
  Euclidean manifolds}
\label{Section:AsymptoticallyEuclidean}
In this section we study the solvability of the approximate KID
equation on asymptotically Euclidean manifolds. The standard methods
to study elliptic equations on this type of manifolds employ so-called
\emph{weighted Sobolev spaces} ---thus, we start by briefly reviewing our
basic technical tools in Section \ref{Section:SobolevSpaces}. The key
assumption on the class of initial data sets to be considered are
discussed in \ref{Section:DecayAssumptions}. The
existence results for the approximate KID equation are given in
Subsection \ref{ApproximateKIDEquation}.

\subsection{Weighted Sobolev spaces}
\label{Section:SobolevSpaces}
In order to discuss the decay of the various tensor fields in the
3-manifold $\mathcal{S}$ we need to make use of \emph{weighted Sobolev
spaces} ---see e.g. \cite{Can81,Loc81,ChoChr81a,Bar86}. Given an
arbitrary point $p\in \mathcal{S}$ one defines for $x\in \mathcal{S}$
\[
\sigma(x) \equiv \big(1+ d(p,x)^2\big)^{1/2}
\]
where $d(p,x)$ denotes the Riemannian distance on $\mathcal{S}$. The
function $\sigma$ is used to define the weighted $L^2$-norm
\[
\parallel u \parallel_\delta \equiv \left( \int_{\mathcal{S}} |u|^2
  \sigma^{-2\delta-3} \mbox{d}^3 x \right)^{1/2}, \qquad \delta \in \mathbb{R}.
\]
In particular, if $\delta =-3/2$ one recovers the usual
$L^2$-norm. Different choices of origin give rise to equivalent
weighted norms. 

\begin{remark}
{\em In the above and in the rest of the article, we follow Bartnik's
  conventions \cite{Bar86} to denote the weighted Sobolev spaces and
  norms. Different choices of the point $p$ give rise to equivalent
  weighted norms ---see e.g. \cite{Bar86,ChoChr81a}. Thus, in a slight
  abuse of notation, we denote all these equivalent norms with the same symbol.}
\end{remark}

The fall-off behaviour of the various fields will be expressed in
terms of weighted Sobolev spaces $H^s_\delta$ consisting of functions
for which 
\[
\parallel u \parallel_{s,\delta} \equiv \sum_{0\leq |\alpha| \leq
  s} \parallel D^\alpha u \parallel_{\delta -|\alpha|} < \infty,
\]
where $s$ is a non-negative integer, and where
$\alpha=(\alpha_1,\alpha_2,\alpha_3)$ is a multiindex,
$|\alpha|=\alpha_1+\alpha_2+\alpha_3$. One says that $u\in H^\infty_s$
if $u\in H^s_\delta$ for all $s$. We will say that a tensor belongs to a given function space if its norm does. 

\medskip
In the following given some coordinates $x=(x^\alpha)$, let $|x|^2\equiv
\delta_{\alpha\beta}x^\alpha x^\beta$. We will make repeated use of the following result\footnote{Recall that
$f(x)=o(|x|^\alpha)$ if $f(x)/|x|^{\alpha}\rightarrow 0$ as $|x|\rightarrow0$. If
$\partial^n f(x) =o(|x|^{\alpha-n})$ for each non-negative integer,
then we write $f(x)=o_\infty(|x|^\alpha)$.}:

\begin{lemma}
Let $u\in H^\infty_\delta$. Then $u$ is smooth (i.e. $C^\infty$) over
$\mathcal{S}$ and has a fall-off at infinity such that
\[
D^l u = o\big(|x|^{\delta -|l|}\big).
\]
\end{lemma}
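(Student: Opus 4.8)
The plan is to establish this statement as a standard weighted Sobolev embedding, proved in two stages: first interior smoothness, then the asymptotic decay of $u$ and of its derivatives. The only substantial tool is a dyadic rescaling argument on the asymptotic end, which reduces the weighted estimates to the classical Sobolev embedding theorem on a fixed annulus of unit size.

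For smoothness I would first note that on any compact set $\mathcal{U}\subset\mathcal{S}$ the weight $\sigma$ is bounded above and below by positive constants, so that the restriction of $\|\cdot\|_{s,\delta}$ to $\mathcal{U}$ is equivalent to the ordinary Sobolev norm $\|\cdot\|_{H^s(\mathcal{U})}$. Hence $u\in H^\infty_\delta$ implies $u\in H^s_{\mathrm{loc}}$ for every $s$, and the classical embedding $H^s\hookrightarrow C^k$ (valid on the $3$-manifold for $s>k+3/2$) gives $u\in C^\infty$.

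For the decay I would work on the asymptotic end, where $|x|$ and $\sigma(x)$ are comparable, and decompose it into dyadic annuli $A_R\equiv\{R\le|x|\le 2R\}$. On each $A_R$ I rescale by setting $x=Ry$ and $u_R(y)\equiv u(Ry)$, so that $y$ ranges over the fixed annulus $A_1=\{1\le|y|\le 2\}$. Using $D^\alpha_x u = R^{-|\alpha|}D^\alpha_y u_R$, $\mathrm{d}^3x = R^3\,\mathrm{d}^3y$ and $\sigma\approx R$ on $A_R$, a direct computation shows that the scaling weights cancel, yielding
\[
\|u\|^2_{s,\delta;A_R}\;\approx\; R^{-2\delta}\,\|u_R\|^2_{H^s(A_1)},
\]
and hence $\|u_R\|_{H^s(A_1)}\lesssim R^{\delta}\,\|u\|_{s,\delta;\{|x|\ge R\}}$. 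For a fixed multi-index $l$ I then choose $s>|l|+3/2$ and apply the ordinary Sobolev embedding on the fixed domain $A_1$ to obtain $\sup_{A_1}|D^l_y u_R|\le C\,\|u_R\|_{H^s(A_1)}$. Undoing the rescaling, for $x\in A_R$ this reads
\[
|D^l u(x)|\;=\;R^{-|l|}\,\big|D^l_y u_R(x/R)\big|\;\le\; C\,R^{\delta-|l|}\,\|u\|_{s,\delta;\{|x|\ge R\}}.
\]
This is where the full strength of the hypothesis $u\in H^\infty_\delta$ enters: since every such $s$ is admissible, the estimate is available for every multi-index $l$.

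The step requiring the most care is upgrading this bound from $O$ to the claimed $o$ decay. Setting $\epsilon(R)\equiv\|u\|_{s,\delta;\{|x|\ge R\}}$, finiteness of the norm already yields $D^l u = O(|x|^{\delta-|l|})$; to obtain $o(|x|^{\delta-|l|})$ I would use that $\epsilon(R)\to 0$ as $R\to\infty$, which holds because the tail of a convergent weighted integral vanishes. Since $|x|\approx R$ on $A_R$, dividing the last estimate by $|x|^{\delta-|l|}$ and letting $|x|\to\infty$ forces $|x|^{|l|-\delta}|D^l u(x)|\to 0$, that is $D^l u = o(|x|^{\delta-|l|})$. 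I would finally remark that the comparability of the Riemannian distance $d(p,x)$ and the coordinate radius $|x|$ on the asymptotic end, and hence of $\sigma$ with $|x|$, is what allows the intrinsic weight appearing in the norm to be traded for the coordinate factor appearing in the stated conclusion.
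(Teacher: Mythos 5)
Your proof is correct, and it is essentially the standard argument: the paper itself gives no proof of this lemma, deferring instead to Bartnik \cite{Bar86} and Section 6.1 of \cite{BaeVal10b}, where exactly this strategy (scale invariance of the Bartnik-weighted norms under the dyadic rescaling $x=Ry$, classical Sobolev embedding on a fixed annulus, and the vanishing-tail argument to upgrade $O$ to $o$) is used. Your observation that the weights $\delta-|\alpha|$ make the rescaled norms cancel exactly is precisely the point of Bartnik's conventions, so nothing further is needed.
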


The proof can be found in \cite{Bar86} ---see also Section 6.1 in
\cite{BaeVal10b}. The following \emph{Multiplication Lemma} has been proven in
\cite{BaeVal10b}:

\begin{lemma}
Let $u=o_\infty(|x|^{\delta_1})$, $v=o_\infty(|x|^{\delta_2})$ and
$w=O(|x|^\gamma)$. Then 
\[
uv = o_\infty (|x|^{\delta_1+\delta_2}), \qquad uw = o_\infty(|x|^{\delta_1+\gamma}).
\]
\end{lemma}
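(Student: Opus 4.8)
The plan is to reduce the statement to a derivative-by-derivative check dictated by the definition of $o_\infty$, and then to exploit the Leibniz rule together with an elementary product estimate for the symbols $o$ and $O$. First I would record the elementary fact that if $f=o(|x|^a)$ and $g=O(|x|^b)$ then $fg=o(|x|^{a+b})$, and likewise that $o(|x|^a)\cdot o(|x|^b)=o(|x|^{a+b})$. Both follow at once by writing
\[
\frac{fg}{|x|^{a+b}}=\frac{f}{|x|^{a}}\cdot\frac{g}{|x|^{b}}
\]
and observing that the first factor tends to $0$ while the second stays bounded (respectively also tends to $0$) as $|x|\to\infty$.

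For the product $uv$ it suffices, by the definition of $o_\infty$, to show that $\partial^\beta(uv)=o(|x|^{\delta_1+\delta_2-|\beta|})$ for every multi-index $\beta$. Applying the Leibniz rule gives
\[
\partial^\beta(uv)=\sum_{\gamma\leq\beta}\binom{\beta}{\gamma}\,(\partial^\gamma u)\,(\partial^{\beta-\gamma}v).
\]
The hypotheses $u=o_\infty(|x|^{\delta_1})$ and $v=o_\infty(|x|^{\delta_2})$ yield $\partial^\gamma u=o(|x|^{\delta_1-|\gamma|})$ and $\partial^{\beta-\gamma}v=o(|x|^{\delta_2-|\beta|+|\gamma|})$, so by the product estimate each summand is $o(|x|^{\delta_1+\delta_2-|\beta|})$; the cancellation of $|\gamma|$ in the exponent is what makes the order uniform across the sum. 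As the sum is finite, the whole expression is $o(|x|^{\delta_1+\delta_2-|\beta|})$, which is exactly the assertion $uv=o_\infty(|x|^{\delta_1+\delta_2})$.

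The argument for $uw$ is identical once the meaning of $w=O(|x|^\gamma)$ is fixed, and this is the only real point of care in the proof: for $uw$ to lie in $o_\infty$ one needs control not merely on $w$ but on all of its derivatives. Accordingly I would read the hypothesis in the strong sense $\partial^l w=O(|x|^{\gamma-|l|})$ for every $l$ — the natural counterpart of the $o_\infty$ notation, and the sense in which it is used in the sequel. With this reading, the same Leibniz expansion presents each summand as a product of an $o$-term with an $O$-term, hence an $o$-term of the correct order, and summing the finitely many contributions gives $uw=o_\infty(|x|^{\delta_1+\gamma})$. I would also note that all the estimates are claims about the regime $|x|\to\infty$, so that smoothness of $u$, $v$ and $w$ takes care of any bounded region and no separate argument is required there.
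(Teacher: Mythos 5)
Your proof is correct. Note that the paper itself contains no proof of this lemma to compare against: it is quoted as the \emph{Multiplication Lemma} and attributed to \cite{BaeVal10b}. Your argument --- reduce to a derivative-by-derivative check, expand $\partial^\beta(uv)$ by the Leibniz rule, use $\partial^\gamma u = o(|x|^{\delta_1-|\gamma|})$ and $\partial^{\beta-\gamma}v = o(|x|^{\delta_2-|\beta|+|\gamma|})$, and observe that the exponents recombine to $\delta_1+\delta_2-|\beta|$ uniformly over the finitely many terms --- is the standard argument and is, in essence, the proof given in the cited reference; there is nothing genuinely different in route here.

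The one substantive point is the one you flagged, and you were right to flag it: with $w=O(|x|^\gamma)$ read as a pointwise bound only, the second claim is false. For instance, take $u=|x|^{\delta_1-1}$ for large $|x|$, which is $o_\infty(|x|^{\delta_1})$, and $w=\sin(|x|^2)$, which is $O(1)$ (so $\gamma=0$); then $\partial_\alpha(uw)$ contains the term $2\,|x|^{\delta_1-1}x_\alpha\cos(|x|^2)$, of order $|x|^{\delta_1}$, so $uw$ is not $o_\infty(|x|^{\delta_1})$. The hypothesis must therefore be read in the strengthened sense $\partial^l w = O(|x|^{\gamma-|l|})$ for every multi-index $l$, which is the counterpart of the $o_\infty$ notation and the sense in which the lemma is actually applied in the paper (the functions playing the role of $w$ there are built from $h_{ij}$, $K_{ij}$ and their derivatives, which obey exactly such bounds). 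With that reading made explicit, your proof is complete.
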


\begin{remark}
{\em This lemma can be readily extended to tensor fields.}
\end{remark}

\subsection{Decay assumptions}
\label{Section:DecayAssumptions}
In what follows we will consider initial data sets
$(\mathcal{S},h_{ij},K_{ij})$ for the vacuum Einstein field equations
possessing, in principle, several \emph{asymptotically Euclidean
ends}. Thus, we assume there exists a compact set $\mathcal{B}$ such that
\[
\mathcal{S}\setminus \mathcal{B}= \sum_{k=1}^n \mathcal{S}_{(k)} 
\]
where $\mathcal{S}_{(k)}$, $k=1,\ldots,n$, are open sets diffeomorphic to the complement of
a closed ball on $\mathbb{R}^3$. Each set $\mathcal{S}_{(k)}$ is
called an \emph{asymptotic end}.  On
each of these ends one can introduce (non-unique) \emph{asymptotically Cartesian}
coordinates $x=(x^\alpha)$. Our basic decay assumptions for the fields
$h_{ij}$ and $K_{ij}$ are expressed in terms of these coordinates:

\begin{assumption}[Decay Assumptions]
\label{Decay:Assumptions}
{\em On each asymptotically Euclidean end one has
\begin{eqnarray*}
&& h_{\alpha\beta} - \delta_{\alpha\beta} = o_\infty(|x|^{-1/2}), \\
&& K_{\alpha\beta}= o_\infty(|x|^{-3/2}).
\end{eqnarray*}
We assume further that the decay is such that the ADM 4-momentum is non-vanishing.}
 \end{assumption}

The following definition will prove useful:

\begin{definition}
\label{Definition:StationaryData}
An asymptotically Euclidean initial data set
$(\mathcal{S},h_{ij},K_{ij})$ satisfying the  Decay
Assumptions \ref{Decay:Assumptions} is said to be \emph{stationary} if
there exists non-trivial $(N,N^i)\in H^2_{1/2}$ such that
\begin{equation}
\mathcal{P}^*
\left(
\begin{array}{c}
N \\
N^i
\end{array}
\right) =0,
\label{StationaryEquation}
\end{equation}
\end{definition}

\begin{remark}
\label{StationaryDataSetAdmitsKID}
{\em As it is to be expected, a stationary initial data set (in the sense
of Definition \ref{Definition:StationaryData}) admits a KID. To see
this, observe that equation \eqref{StationaryEquation} implies that
\begin{subequations}
\begin{eqnarray}
&& D_i D_j N - N r_{ij} -\Delta_\bmh N h_{ij} + H_{ij}=0, \label{StationaryEq1}\\
&& D_k\big( D_{(i} N_{j)} - D^l N_l h_{ij} +F_{ij} \big)=0. \label{StationaryEq2}
\end{eqnarray}
\end{subequations}
Direct inspection shows that, if one assumes $(N,N^i)\in H^2_{1/2}$ in addition to decay Assumption \ref{Decay:Assumptions}, then
\[
D_{(i} N_{j)} - D^l N_l h_{ij} +F_{ij} =o(|x|^{-1/2})
\]
---i.e. the tensor field $D_{(i} N_{j)} - D^l N_l h_{ij} +F_{ij}$ vanishes at infinity. Since it is also covariantly constant as a consequence of (\ref{StationaryEq2}), it follows then that it must vanish identically  ---i.e.
\[
D_{(i} N_{j)} - D^l N_l h_{ij} +F_{ij}=0.
\]
Combining this observation with (\ref{StationaryEq1}), we see that
$D\Phi^*(N,N^i)=0$ and hence $(N,-\tfrac{1}{2}N^i)$ solves the KID
equations (\ref{KID1})-(\ref{KID2}) ---see Remark
\ref{Rmk:EquivalenceOfKIDsandDPhiStar}.  Finally, we observe that the
behaviour $(N,-\tfrac{1}{2}N^i)=o(|x|^{1/2})$ for a KID is only
consistent with {\em translational Killing vector fields} ---i.e. Killing
vectors which to leading order look like a (timelike,
spatial or null) translation in the Minkowski spacetime. Now, the only
type of translational Killing vector a spacetime with non-vanishing
ADM 4-momentum can admit is one which is timelike and bounded at
infinity ---i.e. a stationary Killing vector, see Section III in
\cite{BeiChr96}. Clearly the reverse is also true: if an initial data
set admits a stationary Killing vector, then the data is stationary in
the sense of Definition 
\ref{Definition:StationaryData}. It should be stressed that \emph{our
definition of stationary initial data sets excludes initial data sets
for the Minkowski spacetime} as these necessarily have a vanishing ADM
4-momentum. The condition on the ADM 4-momentum in Definition
\ref{Definition:StationaryData} arises from the need to single out the
stationary Killing vector field from among the collection of
translational Killing vectors.}
\end{remark}


The asymptotic behaviour of solutions to the KID equations has been
studied in \cite{BeiChr96} from where we adapt the following result:

\begin{proposition}
\label{Proposition:AsymptoticBehaviourKIDs}
Let $(\mathcal{S},h_{ij},K_{ij})$ denote a smooth vacuum initial data set
satisfying the Decay Assumptions \ref{Decay:Assumptions}. Moreover,
let $N$, $Y^i$ be, respectively, a smooth scalar field and a vector
field over $\mathcal{S}$ satisfying the KID equations. Then, there
exists a constant tensor with components $\mathfrak{L}_{\mu\nu}=\mathfrak{L}_{[\mu\nu]}$ such
that
\[
N -\mathfrak{L}_{0\alpha} x^\alpha = o_\infty( |x|^{1/2}), \qquad
Y^\alpha - \mathfrak{L}_{\alpha\beta}x^\beta = o_\infty( |x|^{1/2}).
\]
If $\mathfrak{L}_{\mu \nu}=0$, then there exists a constant vector with
components $\mathfrak{A}^\mu$ such that
\[
N - \mathfrak{A}^0 = o_\infty (|x|^{-1/2}), \qquad Y^\alpha - \mathfrak{A}^\alpha = o_\infty(|x|^{-1/2}).
\]
Finally, if $\mathfrak{L}_{\mu\nu}=\mathfrak{A}^\mu=0$, then $N=0$ and $Y^i=0$. 
\end{proposition}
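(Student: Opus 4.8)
The plan is to turn the first-order overdetermined KID system \eqref{KID1}--\eqref{KID2} into a coupled \emph{second-order elliptic} system for $(N,Y^i)$ whose principal part is, to leading order, the flat Laplacian, and then to read the asymptotics off the kernel of the flat operator in weighted spaces. From \eqref{KID1} one has $D_{(i}Y_{j)}=-N K_{ij}$; applying a divergence and commuting derivatives with the Ricci identity yields a Laplace-type equation $\Delta_\bmh Y_j=$ (terms in $r_{ij}Y^i$ and in $D(NK)$). From \eqref{KID2} one reads off the Hessian relation $D_iD_jN=N(r_{ij}+KK_{ij}-2K_{ik}K^k{}_j)-Y^kD_kK_{ij}-2D_{(i}Y^kK_{j)k}$, whose trace is a Poisson equation for $N$. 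Under the Decay Assumptions \ref{Decay:Assumptions} one has $r_{ij}=o_\infty(|x|^{-5/2})$ and $D_kK_{ij}=o_\infty(|x|^{-5/2})$, so once a polynomial growth bound on $(N,Y^i)$ is in hand the Multiplication Lemma makes every source term decay: with $N,Y^i=O(|x|)$ each source is $o_\infty(|x|^{-3/2})$ (e.g. $Nr_{ij}=o_\infty(|x|^{-3/2})$ and $Y^kD_kK_{ij}=o_\infty(|x|^{-3/2})$).

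With the sources controlled I would establish a preliminary polynomial growth bound by a weighted elliptic estimate, and then invoke Bartnik's asymptotic-expansion theorem for the flat Laplacian: a solution of $\Delta_\bmh u=o_\infty(|x|^{-3/2})$ admits a finite expansion in homogeneous harmonics down to the weight $|x|^{1/2}$ (two integrations raise $-3/2$ to $1/2$). The overdetermined equations then \emph{refine} these harmonics and eliminate every mode of degree $\ge 2$, since the Hessian of a degree-$\ge 2$ harmonic, and the Killing operator applied to a degree-$\ge 2$ harmonic vector, are nonvanishing homogeneous tensors that cannot match a strictly decaying source. This leaves only the degree-$\le 1$ part: the Hessian relation forces $N$ affine, $N=\mathfrak{L}_{0\alpha}x^\alpha+\mathfrak{A}^0+o_\infty(|x|^{1/2})$, while $D_{(\alpha}Y_{\beta)}=-NK_{\alpha\beta}=o_\infty(|x|^{-1/2})$ forces the linear part of $Y^\alpha$ to be a flat Killing field, $Y^\alpha=\mathfrak{L}_{\alpha\beta}x^\beta+\mathfrak{A}^\alpha+o_\infty(|x|^{1/2})$ with $\mathfrak{L}_{\alpha\beta}=\mathfrak{L}_{[\alpha\beta]}$. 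Setting $\mathfrak{L}_{00}=0$ and $\mathfrak{L}_{\alpha0}=-\mathfrak{L}_{0\alpha}$ assembles the boost data from the $N$-sector and the rotation data from the $Y$-sector into a single antisymmetric $\mathfrak{L}_{\mu\nu}$, which is the first claim. If $\mathfrak{L}_{\mu\nu}=0$ then $N,Y^i=O(1)$, so every source improves by one power to $o_\infty(|x|^{-5/2})$; integrating twice now produces the constants $\mathfrak{A}^\mu$ with remainder $o_\infty(|x|^{-1/2})$, giving the second claim.

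For the final rigidity statement, suppose $\mathfrak{L}_{\mu\nu}=\mathfrak{A}^\mu=0$, so that $N,Y^i=o_\infty(|x|^{-1/2})$ decay at infinity. Here I would \emph{prolong} \eqref{KID1}--\eqref{KID2} into a closed first-order linear system $D_kU=\mathcal{A}_kU$ for the collection $U=(N,Y^i,D_iN,D_iY_j,\dots)$ of the fields and finitely many derivatives, the closure relations coming from differentiating the KID equations and using the constraints \eqref{HamiltonianConstraint}--\eqref{MomentumConstraint} and the contracted Bianchi identity. The coefficient $\mathcal{A}_k$ is built algebraically from $K_{ij}$, $D_kK_{ij}$ and $r_{ij}$, hence $|\mathcal{A}_k|=o(|x|^{-3/2})$, so $\int^{\infty}|\mathcal{A}_k|\,\mathrm{d}r<\infty$ along radial rays. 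Integrating the differential inequality $\partial_r|U|^2\ge -2|\mathcal{A}|\,|U|^2$ via Gronwall gives $|U(R)|^2\le|U(r)|^2\exp\!\big(2\int_R^r|\mathcal{A}|\,\mathrm{d}s\big)$; letting $r\to\infty$ with $U\to0$ forces $U(R)=0$ in the asymptotic region, and unique continuation for the first-order system then propagates this to $U\equiv0$, i.e. $N=0$ and $Y^i=0$. I expect this rigidity step to be the main obstacle, as it is where the \emph{overdetermined} character of the KID equations is indispensable: mere ellipticity and decay do not preclude a nonzero solution falling off like $|x|^{-1}$, and it is precisely the prolongation together with the integrability of $\mathcal{A}_k$ along rays --- guaranteed by the sharp $|x|^{-3/2}$ decay of $K_{ij}$ in the Decay Assumptions \ref{Decay:Assumptions} --- that rules this out.
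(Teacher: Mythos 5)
Note first that the paper does not prove Proposition \ref{Proposition:AsymptoticBehaviourKIDs} at all: it is imported, with adapted notation, from \cite{BeiChr96}, where the argument is precisely the prolongation/ODE-along-rays analysis that you deploy only in your final paragraph. Measured against that argument, your proposal has two genuine gaps, both traceable to the same oversight. The first is the preliminary polynomial growth bound, on which your entire elliptic argument rests (``once a polynomial growth bound on $(N,Y^i)$ is in hand\dots''). This cannot be obtained ``by a weighted elliptic estimate'': the proposition assumes only that $(N,Y^i)$ are smooth, whereas weighted estimates such as Proposition \ref{Prop:EllipticRegularity} apply only to fields \emph{already known} to lie in some $H^0_\delta$, and can never establish such membership. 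Worse, your second-order reduction discards exactly the information that forces polynomial growth: already for flat data ($h_{ij}=\delta_{ij}$, $K_{ij}=0$) the reduced system is $\Delta_\bmdelta N=0$, $\Delta_\bmdelta Y_j=0$, whose kernel contains exponentially growing harmonic functions such as $\exp(x^1)\sin(x^2)$; no elliptic estimate applied to the reduction can exclude these. The growth bound must come from the overdetermined structure itself: the prolonged variables $U=(N,\,Y_i,\,D_iN,\,\omega_{ij})$, $\omega_{ij}\equiv D_{[i}Y_{j]}$, satisfy a first-order system that is \emph{triangular} --- the derivatives of the top tier $(D_iN,\omega_{ij})$ involve only decaying coefficients ($r_{ij}$, $K_{ij}$, $D_kK_{ij}$) --- and integrating along rays, first for the top tier and then for $(N,Y)$, gives boundedness of $(DN,\omega)$ and hence at most linear growth of $(N,Y)$. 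Only then can your harmonic-expansion step (which is otherwise fine, including the elimination of degree-$\ge 2$ harmonics) begin; the same remark applies to your claim that $\mathfrak{L}_{\mu\nu}=0$ gives $N,Y=O(1)$, where part one only gives $o_\infty(|x|^{1/2})$ and an iteration is needed.

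The second gap is in the rigidity step: the assertion that $\mathcal{A}_k$ ``is built algebraically from $K_{ij}$, $D_kK_{ij}$ and $r_{ij}$, hence $|\mathcal{A}_k|=o(|x|^{-3/2})$'' is false. Any closure of the prolongation necessarily contains the unit-coefficient rows $D_kN=(DN)_k$ and $D_kY_j=\omega_{kj}-NK_{kj}$, so $|\mathcal{A}_k|=O(1)$, $\int^\infty|\mathcal{A}_k|\,\mathrm{d}r=\infty$, and your Gronwall inequality yields only $|U(R)|\le|U(r)|\,e^{C(r-R)}$, which is vacuous as $r\to\infty$ against the polynomial decay $|U(r)|=o(r^{-1/2})$. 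The repair again uses the triangular structure rather than a norm bound on the full matrix: setting $M(s)\equiv\sup_{t\ge s}\max\bigl\{t^{1/2}|(N,Y)|,\;t^{3/2}|(DN,\omega)|\bigr\}$, integrating the top-tier equations inward from infinity gives $|(DN,\omega)(s)|\le o(s^{-2})\,M(s)$, feeding this into the bottom tier gives $|(N,Y)(s)|\le o(s^{-1})\,M(s)$, and together these yield $M(s)\le o(s^{-1/2})\,M(s)$, forcing $U\equiv 0$ near infinity; your unique-continuation step (the path-ordered exponential for the first-order system) is correct and then propagates $U\equiv 0$ to all of $\mathcal{S}$. In short, both the step you hand-waved and the step you identified as the crux fail for the same reason: the prolongation matrix is not small --- it has $O(1)$ blocks --- and all the decay lives in the couplings \emph{into} the top tier, which is exactly the structure the argument of \cite{BeiChr96} exploits.
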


\subsection{Basic results of the theory of elliptic equations on
  asymptotically Euclidean manifolds}
\label{Section:Ellipticity}
In view of the Decay Assumptions \ref{Decay:Assumptions}, the approximate KID equation \eqref{ApproximateKIDEquation} can be
written, in local coordinates, in the form
\[
\mathcal{L}\mathbf{u}\equiv (\mathbf{A}^{\alpha\beta\gamma\delta} +
\mathbf{a}^{\alpha\beta\gamma\delta})\cdot\partial_\alpha\partial_\beta\partial_\gamma\partial_\delta
\mathbf{u} +
\mathbf{a}^{\alpha\beta\gamma}\cdot\partial_\alpha\partial_\beta\partial_\gamma
\mathbf{u} +\mathbf{a}^{\alpha\beta}\cdot\partial_\alpha\partial_\beta \mathbf{u}
+\mathbf{a}^\alpha\cdot\partial_\alpha \mathbf{u} + \mathbf{a} \cdot\mathbf{u}=0,
\]
where $\mathbf{u}:\mathcal{S}\rightarrow \mathbb{R}^4$ is a
vector-valued function over $\mathcal{S}$,
$\mathbf{A}^{\alpha\beta\gamma\delta}$ denote constant matrices, while $\mathbf{a}^{\alpha\beta\gamma\delta}$, $\mathbf{a}^{\alpha\beta\gamma}$, $\mathbf{a}^{\alpha\beta}$,
$\mathbf{a}^{\alpha}$ and $\mathbf{a}$ denote smooth matrix-valued functions of the coordinates
$x=(x^\alpha)$. 

\medskip
The operator $\mathcal{L}$ is said to be \emph{asymptotically
  homogeneous} if 
\[
\mathbf{a}^{\alpha\beta\gamma\delta}\in H^\infty_\tau, \qquad
\mathbf{a}^{\alpha\beta\gamma}\in H^\infty_{\tau-1}, \qquad
\mathbf{a}^{\alpha\beta}\in H^\infty_{\tau-2}, \qquad
\mathbf{a}^\alpha\in H^\infty_{\tau-3}, \qquad \mathbf{a} \in H^\infty_{\tau-4},
\]
for some $\tau<0$ ---see e.g. \cite{Can81,Loc81}. 

\begin{remark}
{\em Direct inspection using the Decay Assumptions
\ref{Decay:Assumptions} imply that $\mathcal{L}$ is asymptotically
homogeneous with $\tau=-1/2$. This is the standard assumption when working with
weighted Sobolev spaces.} 
\end{remark}

\medskip
In the following we will make use of the following version of the
\emph{Fredholm alternative} for fourth-order asymptotically homogeneous
operators on asymptotically Euclidean manifolds ---see \cite{Can81}:

\begin{proposition}
Let $\mathcal{L}$ be an asymptotically homogeneous elliptic operator
of order 4 with smooth coefficients. Given $\delta$ not a negative
integer, the equation
\[
\mathcal{L} \mathbf{u} = \mathbf{f}, \qquad \mathbf{f}\in H^0_{\delta-4}
\]
has a solution $\mathbf{u}\in H^4_\delta$ if and only if 
\[
\int_{\mathcal{S}} \mathbf{f}\cdot \mathbf{v}\,\mbox{\em d}\mu=0
\]
for all $\mathbf{v}$ satisfying 
\[
\mathcal{L}^* \mathbf{v} =0, \qquad \mathbf{v}\in H^0_{1-\delta},
\]
where $\mathcal{L}^*$ denotes the formal adjoint of $\mathcal{L}$. 
\end{proposition}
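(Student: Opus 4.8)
The plan is to regard $\mathcal{L}$ as a bounded linear operator $\mathcal{L}\colon H^4_\delta \to H^0_{\delta-4}$ between Hilbert spaces and to prove that, for $\delta$ not a negative integer, it is \emph{Fredholm}; the stated solvability criterion is then the closed range theorem read off through the natural $L^2$ duality of the weighted spaces. The first thing I would check is boundedness of $\mathcal{L}$: writing it in the coordinate form displayed above and using that it is asymptotically homogeneous with $\tau=-1/2$ (so that the coefficient deviations lie in $H^\infty_{\tau}, H^\infty_{\tau-1},\ldots$), the Multiplication Lemma ensures that each term maps $H^4_\delta$ continuously into $H^0_{\delta-4}$.

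The analytic core is the Fredholm property, which I would obtain from a weighted elliptic estimate of the form
\[
\| \mathbf{u} \|_{4,\delta} \le C\big( \| \mathcal{L}\mathbf{u} \|_{0,\delta-4} + \| \mathbf{u} \|_{0,\delta'} \big),
\]
with $\delta'$ a fixed weight and the last term effectively supported on a compact set up to controllable error. Near infinity $\mathcal{L}$ differs from its constant-coefficient model $\mathcal{L}_0 = \mathbf{A}^{\alpha\beta\gamma\delta}\partial_\alpha\partial_\beta\partial_\gamma\partial_\delta$ by coefficients decaying in $H^\infty_{\tau}$, $\tau<0$. The symbol computation above exhibits $\mathcal{L}_0$ as a constant-coefficient elliptic system whose indicial weights coincide with those of the flat Laplacian iterated to fourth order; the key input from the theory of Cantor and Lockhart--McOwen is that $\mathcal{L}_0\colon H^4_\delta\to H^0_{\delta-4}$ is an isomorphism precisely when $\delta$ avoids the exceptional weights, which lie among the integers. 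The hypothesis that $\delta$ is not a negative integer is exactly what places both $\mathcal{L}$ and $\mathcal{L}^*$ in the non-exceptional regime. Combining the interior elliptic estimate with a cutoff near infinity and the smallness of the coefficient deviations there yields the displayed estimate; by Rellich compactness of the compactly supported remainder this gives a finite-dimensional kernel and closed range, and the same argument applied to $\mathcal{L}^*$ controls the cokernel.

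It remains to identify the cokernel. The unweighted $L^2$ pairing $\int_{\mathcal{S}} \mathbf{f}\cdot\mathbf{v}\,\mathrm{d}\mu$ identifies $(H^0_{\delta-4})^*$ with $H^0_{1-\delta}$, the crucial point being the weight arithmetic $(\delta-4)+(1-\delta)=-3$, which is precisely the borderline making the product integrable against $\sigma^{-3}$. Under this identification the Banach-space adjoint of $\mathcal{L}$ is represented by the formal adjoint differential operator $\mathcal{L}^*$, and any $\mathbf{v}\in H^0_{1-\delta}$ annihilating the range of $\mathcal{L}$ is a weak solution of $\mathcal{L}^*\mathbf{v}=0$; weighted elliptic regularity then bootstraps such $\mathbf{v}$ into $H^4_{1-\delta}$, so the annihilator coincides with $\{\mathbf{v}\in H^0_{1-\delta}\colon \mathcal{L}^*\mathbf{v}=0\}$. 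Since the range is closed, the closed range theorem gives $\mathrm{Range}(\mathcal{L}) = (\ker \mathcal{L}^*)^{\circ}$, which is exactly the assertion that $\mathcal{L}\mathbf{u}=\mathbf{f}$ is solvable in $H^4_\delta$ if and only if $\int_{\mathcal{S}} \mathbf{f}\cdot\mathbf{v}\,\mathrm{d}\mu = 0$ for every $\mathbf{v}\in H^0_{1-\delta}$ with $\mathcal{L}^*\mathbf{v}=0$.

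The hardest step is the Fredholm estimate, and inside it the isomorphism property of the model operator on weighted spaces together with the precise identification of the exceptional weights. This rests on the indicial-root analysis (via the Mellin transform in the radial variable and decomposition into spherical harmonics) for the constant-coefficient bi-Laplacian-type system, complicated here by the Douglis--Nirenberg mixed-order structure of $\mathcal{P}\circ\mathcal{P}^*$ visible in its symbol. I would import this from \cite{Can81,Loc81} rather than reprove it, so that the remaining work reduces to the (routine but lengthy) verification of boundedness, the duality index bookkeeping, and the regularity bootstrap.
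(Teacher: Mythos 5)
The first thing to say is that the paper contains no proof of this proposition: it is imported wholesale, "adapted" from Cantor \cite{Can81}, just as the elliptic estimate of Proposition \ref{Prop:EllipticRegularity} is. So there is no in-paper argument to compare yours against, and your sketch must be judged on its own terms. Judged that way, its skeleton is the standard one underlying the cited result, and most of it is sound: boundedness of $\mathcal{L}\colon H^4_\delta\to H^0_{\delta-4}$ from asymptotic homogeneity; a scale-broken estimate plus Rellich compactness giving finite-dimensional kernel and closed range; the closed range theorem; the duality bookkeeping $(\delta-4)+(1-\delta)=-3$, which is exactly the correct borderline for the unweighted $L^2$ pairing in Bartnik's conventions, identifying $(H^0_{\delta-4})^*$ with $H^0_{1-\delta}$; and the elliptic-regularity bootstrap showing that any annihilator of the range is a genuine solution of $\mathcal{L}^*\mathbf{v}=0$. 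Note, however, that since you explicitly import the model-operator isomorphism and the exceptional-weight analysis from \cite{Can81,Loc81}, your proof ultimately rests on the same citation the paper makes; the only place where your argument does independent work is the weight bookkeeping.

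That is precisely where the gap is. You assert that the hypothesis ``$\delta$ not a negative integer'' is ``exactly what places both $\mathcal{L}$ and $\mathcal{L}^*$ in the non-exceptional regime,'' but this does not follow from --- and is in tension with --- your own correct observation that the exceptional weights lie among the integers. For the flat model on $\mathbb{R}^3$ the exceptional set is \emph{all} of $\mathbb{Z}$: the growing homogeneous solutions give the non-negative indicial roots (indeed $\Delta^2_{\bmdelta}|x|=0$, the very fact the paper's Ansatz exploits, exhibits $\delta=1$ as an indicial root of the model operator), while the decaying multipoles give the negative ones. Hence a positive-integer $\delta$ is allowed by the stated hypothesis yet is exceptional; there the operator fails to be Fredholm, the range fails to be closed, and orthogonality to $\ker\mathcal{L}^*$ then characterises only the \emph{closure} of the range, so the ``if'' direction of the alternative breaks down. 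The reconciliation is that ``not a negative integer'' is a translation of Cantor's non-exceptionality condition stated in his decay-weight conventions, which differ from Bartnik's by a sign and a shift; a complete proof must either carry out that translation carefully or establish the statement for $\delta$ in the complement of the actual (integer) exceptional set --- which is all the paper ever uses, since the proposition is only applied at $\delta=1/2$. A minor further quibble: the system is not of Douglis--Nirenberg mixed order; the paper's ellipticity lemma exhibits a uniformly fourth-order, block-structured principal symbol, so no mixed-order subtlety enters the indicial analysis.
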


Finally, to assert the regularity of solutions we need the following
elliptic estimate ---see Theorem 6.3. of \cite{Can81}:

\begin{proposition}
\label{Prop:EllipticRegularity}
Let $\mathcal{L}$ be an asymptotically homogeneous elliptic operator
of order 4 with smooth coefficients. Then for any $\delta\in
\mathbb{R}$ and any $s\geq 4$, there exists a constant $C$ such that
for every $\mathbf{v}\in H^s_{loc}\cap H^0_\delta$, the following
inequality holds:
\[
\parallel \mathbf{v}\parallel_{H^s_\delta} \leq
C\big( \parallel\mathcal{L}\mathbf{v} \parallel_{H^{s-4}_{\delta-2}} +
\parallel \mathbf{v} \parallel_{H^{s-4}_{\delta}}\big).
\]
\end{proposition}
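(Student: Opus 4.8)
The plan is to read Proposition~\ref{Prop:EllipticRegularity} as a particular case of the standard weighted a priori estimate for elliptic systems on asymptotically Euclidean ends, and to recover it from interior elliptic estimates by a scaling argument, in the spirit of \cite{Can81,Bar86}. First I would record the \emph{interior} estimate for $\mathcal{L}$: since $\mathcal{L}$ is elliptic (the symbol computed in the ellipticity Lemma is an isomorphism for $|\xi|^2\neq 0$), the Agmon--Douglis--Nirenberg theory yields, on concentric balls $B_1\subset B_2$ of fixed radii, a bound $\parallel\mathbf{v}\parallel_{H^s(B_1)}\leq C(\parallel\mathcal{L}\mathbf{v}\parallel_{H^{s-4}(B_2)}+\parallel\mathbf{v}\parallel_{L^2(B_2)})$, with $C$ controlled by the ellipticity constant and by finitely many derivatives of the coefficients. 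One subtlety to keep in mind here is that the principal symbol is \emph{not} homogeneous of a single order ---the scalar slot appears at order $|\xi|^2$ and the vector slot at order $|\xi|^4$--- so the interior estimate, and the weight bookkeeping below, should be organised in the Douglis--Nirenberg sense, with the orders of the unknowns fixed by the physical dimensions $[X]=L^{-2}$, $[X_i]=1$ noted earlier.

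Next I would \emph{globalise}. Cover each asymptotic end by dyadic annuli $A_R=\{R\leq|x|\leq 2R\}$, $R=2^k$, and in each rescale $x=Ry$ so that $A_R$ is mapped to a fixed annulus. The essential mechanism is the asymptotic homogeneity of $\mathcal{L}$ guaranteed by the Decay Assumptions~\ref{Decay:Assumptions} (with $\tau=-1/2$): under this rescaling the top-order coefficients converge as $R\to\infty$ to the constant-coefficient model $\mathbf{A}^{\alpha\beta\gamma\delta}\partial_\alpha\partial_\beta\partial_\gamma\partial_\delta$, while all lower-order coefficients carry strictly more negative weights and scale away. Because the model operator is elliptic and ellipticity is an open condition, the rescaled operators are \emph{uniformly} elliptic for all large $R$, with uniform coefficient bounds; hence the fixed-scale interior estimate of the previous step applies on every annulus with one and the same constant $C$.

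Finally I would sum. On $A_R$ one has $\sigma\simeq R$, and the scaling $\partial_x=R^{-1}\partial_y$ produces precisely the powers of $R$ that turn the rescaled fixed-scale bounds into the weighted norms $\parallel\cdot\parallel_{s,\delta}$; tracking these powers through the Douglis--Nirenberg grading is what fixes the weight of $\mathcal{L}\mathbf{v}$ at $\delta-2$ as stated (rather than a naive $\delta-4$) and places the remainder in $H^{s-4}_\delta$. Squaring and summing over $R=2^k$ reconstitutes the global weighted norms, the overlaps of the annuli being uniformly finite, while on the compact core $\mathcal{B}$ the ordinary interior estimate suffices. The regularity part of the statement ---that $\mathbf{v}\in H^s_{loc}\cap H^0_\delta$ in fact lies in $H^s_\delta$--- comes from the same interior estimate together with a routine difference-quotient bootstrap.

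The step I expect to be the main obstacle is the \emph{uniformity of the constant across scales}: one must check that the rescaled operators remain in a fixed neighbourhood of the elliptic model in the relevant coefficient topology, so that a single interior constant serves for all large $R$. This is exactly where the asymptotic homogeneity (hence Assumption~\ref{Decay:Assumptions}) is indispensable, and where the Douglis--Nirenberg bookkeeping must be done carefully so that the right-hand side weight comes out as $\delta-2$. As the statement is quoted verbatim as Theorem~6.3 of \cite{Can81}, the expected shortcut is simply to cite that result, the discussion above serving to confirm that its hypotheses are satisfied in the present setting.
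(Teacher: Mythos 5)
The paper itself offers no proof of Proposition \ref{Prop:EllipticRegularity}: it is quoted, with the reference given immediately before the statement, as Theorem 6.3 of \cite{Can81}. So your closing remark --- cite Cantor's result after checking its hypotheses --- is precisely what the paper does, your annulus--rescaling sketch is indeed the standard mechanism behind such scale-broken estimates (cf. \cite{Bar86}), and the uniformity issue you single out is handled correctly by asymptotic homogeneity, exactly as you say.

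The genuine gap is the step you yourself flag as the crux: no Douglis--Nirenberg bookkeeping can make the right-hand-side weight ``come out as $\delta-2$''. First, your premise is faulty: the scalar entry $2|\xi|^2X$ in the paper's symbol formula is a typo --- the first component of $\mathcal{P}\circ\mathcal{P}^*$ is $2\Delta_\bmh\Delta_\bmh X$ plus lower-order terms, so its principal symbol is $2|\xi|^4X$; the system is uniformly fourth order, and no mixed-order grading is available to alter the weight count. Second, and decisively, the scaling argument forces the weight $\delta-4$: on the annulus $A_R$ the rescaled operator is $R^{-4}$ times a unit-scale elliptic operator, and matching powers of $R$ on the two sides of the interior estimate requires measuring $\mathcal{L}\mathbf{v}$ in $H^{s-4}_{\delta-4}$, not $H^{s-4}_{\delta-2}$. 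Indeed, the inequality with $\delta-2$ is false as stated: take $\mathcal{L}=\Delta^2$ on flat $\mathbb{R}^3$ and $\mathbf{v}_R(x)=\phi(x-Re_1)$ for a fixed bump function $\phi$; then
\[
\parallel\mathbf{v}_R\parallel_{H^4_\delta}\sim R^{-\delta+5/2},
\qquad
\parallel\Delta^2\mathbf{v}_R\parallel_{H^0_{\delta-2}}+\parallel\mathbf{v}_R\parallel_{H^0_\delta}\sim R^{-\delta+1/2},
\]
so no constant $C$ can hold uniformly in $R$. The honest output of your own argument is therefore the proposition with $\delta-4$ in place of $\delta-2$; the $\delta-2$ in the statement is best read as a typo inherited from the second-order case (it is the shift appearing in the analogous estimate for the Laplacian in \cite{Bar86}), and it is harmless for the paper because the estimate is only ever applied to solutions of $\mathcal{L}\mathbf{v}=0$, for which the weight attached to that term is irrelevant. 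Rather than engineering a bookkeeping scheme that delivers $\delta-2$, the correct move is to flag this discrepancy and prove (or cite) the $\delta-4$ version.
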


In the above proposition $H^s_{loc}$ denotes the local Sobolev space
 ---that is, $\mathbf{v}\in H^s_{loc}$ if for an arbitrary smooth
 function $\phi$ with compact support, $\phi\mathbf{v}\in H^s$.

\begin{remark}
{\em If  $\mathcal{L}$ has smooth coefficients and
  $\mathcal{L}\mathbf{v} =0$, then it follows that all the
  $H^s_\delta$ norms of $\mathbf{v}$ are bounded by the $H^0_\delta$
  norm. Thus, it follows that if a solution to
  $\mathcal{L}\mathbf{v}=0$ exists, it must be smooth
  ---\emph{elliptic regularity}. }
\end{remark}

\subsection{Existence of solutions to the approximate Killing vector equation}
\label{Section:ExistenceSolutions}

We are now in the position of analysing the existence of solutions to
the approximate Killing equation \eqref{ApproximateKIDEquation}. Our
main tools will be the Fredholm alternative and integration by
parts. We begin by considering some auxiliary results.

\subsubsection{Auxiliary existence results}
The following result relating solutions to the approximate Killing
equations to solutions to the KID equations will be needed in our main
result:
\begin{lemma}
\label{Lemma:StationarySolutions}
Let $(\mathcal{S},h_{ij},K_{ij})$ be a complete, smooth
asymptotically Euclidean initial data set for the Einstein vacuum
field equations with $n$ asymptotic ends and satisfying the Decay
Assumptions \ref{Decay:Assumptions}. Then, for $0<\beta\leq 1/2$, 
\[\text{ker}\lbrace \mathcal{P}\circ\mathcal{P}^*:H^\infty_\beta\rightarrow H^\infty_{\beta-4}\rbrace=\text{ker}\lbrace\mathcal{P}^*:H^\infty_\beta\rightarrow H^\infty_{\beta-2}\rbrace \]
That is to say, the equation 
\[
\mathcal{P}\circ \mathcal{P}^* 
\left(
\begin{array}{c}
N \\
N^i
\end{array}
\right) =0
\]
admits a solution $(N,N^i)\in
H^\infty_\beta$, $0<\beta\leq 1/2$, if and only if
$(\mathcal{S},h_{ij},K_{ij})$ is stationary in the sense of Definition \ref{Definition:StationaryData}. Moreover, if the solution exists then it is unique up to constant rescaling.
\end{lemma}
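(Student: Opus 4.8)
The plan is to establish the two inclusions separately. The inclusion $\ker\mathcal{P}^*\subseteq\ker(\mathcal{P}\circ\mathcal{P}^*)$ is immediate, since $\mathcal{P}^*(N,N^i)=0$ gives $\mathcal{P}\circ\mathcal{P}^*(N,N^i)=0$ upon applying $\mathcal{P}$. The substance lies in the reverse inclusion, and the natural tool is the integration-by-parts identity \eqref{IntegrationByParts}. Given $(N,N^i)\in H^\infty_\beta$ with $\mathcal{P}\circ\mathcal{P}^*(N,N^i)=0$, I would substitute $(X,X_i)=(N,N_i)$ together with $(\gamma_{ij},q_{kij})=\mathcal{P}^*(N,N_i)$ into \eqref{IntegrationByParts}, integrated over a compact exhausting region $\mathcal{U}$. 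The left-hand side then equals $\int_{\mathcal{U}}(N,N^i)\cdot\mathcal{P}\circ\mathcal{P}^*(N,N_i)\,\mbox{d}\mu=0$, while the bulk term on the right becomes the manifestly non-negative integrand $\int_{\mathcal{U}}|\mathcal{P}^*(N,N_i)|^2\,\mbox{d}\mu$. The whole question thus reduces to showing that the boundary integral $\oint_{\partial\mathcal{U}}n^k(\mathcal{A}_k+\mathcal{B}_k+\mathcal{C}_k+\mathcal{D}_k)\,\mbox{d}S$ tends to zero as $\partial\mathcal{U}$ recedes to infinity along large coordinate spheres in each asymptotic end; positivity of the remaining term then forces $\mathcal{P}^*(N,N_i)=0$ on all of $\mathcal{S}$.

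I expect the main obstacle to be precisely this boundary estimate, where the exact weights are decisive. The weighted-Sobolev embedding lemma gives $D^lN,\,D^lN^i=o(|x|^{\beta-|l|})$, while the Decay Assumptions \ref{Decay:Assumptions} yield $h_{\alpha\beta}-\delta_{\alpha\beta}=o_\infty(|x|^{-1/2})$, $K_{\alpha\beta}=o_\infty(|x|^{-3/2})$ and hence $r_{ij}=o(|x|^{-5/2})$. Using these together with the Multiplication Lemma, the substituted fields $\bar\gamma_{ij}$ and $q_{kij}$ both behave like $o(|x|^{\beta-2})$. Feeding this into the explicit boundary integrands, the dominant contributions come from $\mathcal{A}_k$ and $\mathcal{C}_k$, each of order $o(|x|^{2\beta-3})$, whereas $\mathcal{B}_k$ and $\mathcal{D}_k$ carry extra factors of $K$ and decay faster. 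Since the area element on a sphere of radius $R$ grows like $R^2$, the boundary integral is $o(R^{2\beta-1})$, which tends to zero for $0<\beta\le 1/2$, the endpoint $\beta=1/2$ being saved by the strict little-$o$ decay. Carrying out this bookkeeping term by term, uniformly over the finitely many ends, is the crux of the argument.

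For the stationarity equivalence the kernel identity does most of the work. If the kernel is non-trivial there is a non-trivial $(N,N^i)\in H^\infty_\beta\subseteq H^2_{1/2}$ with $\mathcal{P}^*(N,N^i)=0$, which is exactly the condition in Definition \ref{Definition:StationaryData}. Conversely, if the data is stationary, the argument of Remark \ref{StationaryDataSetAdmitsKID} shows the corresponding $(N,N^i)$ is in fact a KID generating a stationary Killing vector; by Proposition \ref{Proposition:AsymptoticBehaviourKIDs} such a KID approaches a constant $4$-vector $\mathfrak{A}^\mu$ at infinity, so $N$ and $N^i$ are bounded and lie in $H^\infty_\beta$ for every $0<\beta\le 1/2$, whence they solve $\mathcal{P}\circ\mathcal{P}^*(N,N^i)=0$ in $H^\infty_\beta$.

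Finally, for uniqueness I would argue that every element of the kernel is a KID with the asymptotics of Proposition \ref{Proposition:AsymptoticBehaviourKIDs}. The one point deserving care is that a kernel element is genuinely a KID and not merely a solution of $\mathcal{P}^*(N,N^i)=0$; this follows as in Remark \ref{StationaryDataSetAdmitsKID}, since the covariantly constant tensor $D_{(i}N_{j)}-D^lN_l h_{ij}+F_{ij}=o(|x|^{-1/2})$ must then vanish identically, upgrading $\mathcal{P}^*(N,N^i)=0$ to $D\Phi^*(N,N^i)=0$. The bound $(N,N^i)=o(|x|^{1/2})$ forces the linear part $\mathfrak{L}_{\mu\nu}$ to vanish, so each kernel element approaches a constant $\mathfrak{A}^\mu$, and the last clause of Proposition \ref{Proposition:AsymptoticBehaviourKIDs} makes the linear map $(N,N^i)\mapsto\mathfrak{A}^\mu$ injective on the kernel. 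Since the ADM $4$-momentum $P^\mu$ is non-vanishing, the result of \cite{BeiChr96} invoked in Remark \ref{StationaryDataSetAdmitsKID} restricts the admissible translational Killing vectors to those parallel to $P^\mu$, so the image of this injection is at most one-dimensional. Hence the kernel is at most one-dimensional, which is uniqueness up to constant rescaling.
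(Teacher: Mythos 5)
Your proposal is correct and follows essentially the same route as the paper: the reverse inclusion is established via the integration-by-parts identity \eqref{IntegrationByParts} with $(\gamma_{ij},q_{kij})=\mathcal{P}^*(N,N_i)$, the same decay bookkeeping shows the boundary integrands $\mathcal{A}_k,\mathcal{B}_k,\mathcal{C}_k,\mathcal{D}_k$ contribute nothing at infinity (the paper carries this out at the worst case $\beta=1/2$, you for general $0<\beta\le 1/2$), and uniqueness rests on Proposition \ref{Proposition:AsymptoticBehaviourKIDs} together with the Beig--Chru\'sciel restriction on translational Killing vectors for data with non-vanishing ADM 4-momentum \cite{BeiChr96}. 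Your phrasing of uniqueness as injectivity of the map $(N,N^i)\mapsto\mathfrak{A}^\mu$ into a one-dimensional space is merely a reorganisation of the paper's linear-combination contradiction, and your explicit treatment of the converse direction (stationarity implies a kernel element in $H^\infty_\beta$) fills in a step the paper leaves implicit.
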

\begin{proof}
Assume that $\mathcal{P}\circ \mathcal{P}^*(N,N^i)=0$. Making use of
the identity \eqref{IntegrationByParts} with 
\[
\left(
\begin{array}{c}
\gamma^{ij}\\
q^{kij}
\end{array}
\right) = 
\mathcal{P}^* 
\left(
\begin{array}{c}
N \\
N^i
\end{array}
\right)
\]
one finds that 
\[
\int_{\mathcal{S}} 
\mathcal{P}^* 
\left(
\begin{array}{c}
N \\
N^i
\end{array}
\right)\cdot
\mathcal{P}^* 
\left(
\begin{array}{c}
N \\
N^i
\end{array}
\right) \mbox{d}\mu = -\oint_{\partial\mathcal{S}_\infty}n^k (\mathcal{A}_k +
\mathcal{B}_k +\mathcal{C}_k +\mathcal{D}_k) \mbox{d}S
\]
where $\partial \mathcal{S}_\infty$ denotes the sphere at infinity. We
proceed now to evaluate the various boundary terms.

\medskip
We observe that under the Decay
Assumptions \ref{Decay:Assumptions} direct inspection shows
that
\[
H_{ij} = o(|x|^{-2}),
\]
from where it follows that
\begin{eqnarray*}
&& \gamma_{ij} = D_i D_j N - N r_{ij} -\Delta N h_{ij} + H_{ij}= o(|x|^{-3/2}).
\end{eqnarray*}
Hence, one has that
\begin{eqnarray*}
&& \mathcal{A}_k = N D^i\gamma_{ik} - D^iN \gamma_{ik} + D_k N\gamma -N
   D_k\gamma= o(|x|^{-2}).
\end{eqnarray*}
Thus, taking into account that $\mbox{d} S = O(|x|^2)$ one concludes
that
\[
\oint_{\partial\mathcal{S}_\infty} n^k \mathcal{A}_k \mbox{d}S=0.
\]

Next, we consider
\begin{eqnarray*}
&& \mathcal{C}_k =N^i D^l q_{lik} - D^j N^i q_{kij} + D_i N^i q_{kj}{}^j -
   N_i D^l q_{lj}{}^j
\end{eqnarray*}
where
\[
q_{kij} = D_k\big( D_{(i} N_{j)} - D^l N_l h_{ij} - F_{ij}\big),
\qquad F_{ij} =2N (K h_{ij} - K_{ij}).
\]
From the Decay
Assumptions \ref{Decay:Assumptions} it follows that in this case
\[
F_{ij} =o (|x|^{-1}), \qquad q_{kij} = o(|x|^{-3/2})
\]
so that
\[
\mathcal{C}_k = o(|x|^{-2}).
\]
Thus, one has that
\[
\oint_{\partial \mathcal{S}_\infty} \mathcal{C}_k n^k \mbox{d}S =0.
\]

\medskip
Finally, similar considerations give that
\begin{eqnarray*}
&& \mathcal{D}_k = \frac{1}{2}N_k K^{lj} \gamma_{jl} + \frac{1}{2}N^i K_{ik}
   \gamma - N^i K_i{}^l \gamma_{kl} = o(|x|^{-5/2})
\end{eqnarray*}
so that
\[
\oint_{\partial\mathcal{S}_\infty} n^k \mathcal{D}_k \mbox{d}S=0.
\]

\medskip
From the previous discussion it follows then that
\[
\int_{\mathcal{S}} 
\mathcal{P}^* 
\left(
\begin{array}{c}
N \\
N^i
\end{array}
\right)\cdot
\mathcal{P}^* 
\left(
\begin{array}{c}
N \\
N^i
\end{array}
\right) \mbox{d}\mu =0
\]
so that $\mathcal{P}^*(N,N^i)=0$, and therefore that the data is
stationary. Finally, uniqueness of the solution follows from
Proposition \ref{Proposition:AsymptoticBehaviourKIDs}. Suppose, for
contradiction, that there exist two distinct solutions, giving rise to
two distinct KID sets $(N,-\tfrac{1}{2}N^i)$ and
$(\tilde{N},-\tfrac{1}{2}\tilde{N}^i)$. Taking the appropriate linear
combination we arrive at a KID set with a lapse that goes to zero at
infinity while the shift is in $H^\infty_\beta$, $\beta\leq 1/2$
---that is, one has a KID associated to a spatial translation. This
contradicts the fact that the ADM 4-momentum of the initial data is
non-vanishing ---see Section III in \cite{BeiChr96}.
\end{proof}


\begin{remark}
{\em Making use of the asymptotic expansion provided by Proposition \ref{Proposition:AsymptoticBehaviourKIDs}
one finds that for stationary initial data sets, the solutions
provided by Lemma \ref{Lemma:StationarySolutions} are of the form: 
\begin{equation}
 N-\mathfrak{A}^0 = o_\infty(|x|^{-1/2}), \qquad N^\alpha-\mathfrak{A}^\alpha =o_\infty(|x|^{-1/2})
\label{KillingVectorAsymptotics}
\end{equation}
with the components of a $\mathfrak{A}^\mu$ a constant vector field.
}
\end{remark}

\subsubsection{Main existence result}
Following \cite{Dai04c} we now will look for solutions of the approximate
Killing equation  such that  
 \begin{eqnarray*}
 && X = \lambda |x| + \vartheta, \qquad \vartheta \in    H^\infty_{1/2},\\
&& X^i \in H^\infty_{1/2}.
 \end{eqnarray*}
in each asymptotically Euclidean end and where $\lambda$ is a
constant. This Ansatz is motivated by the observation that
$\Delta^2_\bmdelta |x|=0$, with $\Delta_\bmdelta$ the flat Laplacian ---that is, the blowing up term $\lambda
|x|$ is in the kernel of the first component of the operator
$\mathcal{P}\circ\mathcal{P}^*$ evaluated on the 3-dimensional flat metric. 
\begin{theorem}
\label{MainResult}
Let $(\mathcal{S},h_{ij},K_{ij})$ be a complete, smooth
asymptotically Euclidean initial data set for the Einstein vacuum
field equations with $n$ asymptotic ends,  satisfying the Decay
Assumptions \ref{Decay:Assumptions}. Then there exists a 
solution $(X,X^i)$ to the approximate KID equation,
\[
\mathcal{P}\circ \mathcal{P}^* 
\left(
\begin{array}{c}
X \\
X^i
\end{array}
\right) =0,
\]
such that at each asymptotic end one has the asymptotic
behaviour
\begin{eqnarray*}
 && X_{(k)} = \lambda_{(k)} |x| + \vartheta_{(k)}, \qquad \vartheta_{(k)} \in    H^\infty_{1/2},\\
&& X^i_{(k)} \in H^\infty_{1/2},
 \end{eqnarray*}
where $\lambda_{(k)}$, $k=1,\ldots,n$, are constants and $\lambda_{(k)}=0$ for some $k$
if and only if $(\mathcal{S},h_{ij},K_{ij})$ is stationary in the
sense of Definition \ref{Definition:StationaryData}. Moreover, the solution is unique up to constant rescaling.
\end{theorem}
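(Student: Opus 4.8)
The plan is to follow the strategy of \cite{Dai04c} and reduce the problem to an inhomogeneous elliptic equation amenable to the Fredholm alternative. First I would fix smooth cut-off functions $\psi_{(k)}$ localised to the $k$-th end and set $\chi \equiv \sum_{k=1}^n \lambda_{(k)}\psi_{(k)}|x|$, so that the Ansatz reads $X=\chi+\vartheta$ with $(\vartheta,X^i)\in H^\infty_{1/2}$. By linearity the approximate KID equation becomes
\[
\mathcal{P}\circ\mathcal{P}^*\left(\begin{array}{c}\vartheta\\ X^i\end{array}\right)
= -\,\mathcal{P}\circ\mathcal{P}^*\left(\begin{array}{c}\chi\\ 0\end{array}\right)
\equiv\left(\begin{array}{c}f\\ f^i\end{array}\right).
\]
The whole construction hinges on the observation $\Delta_\bmdelta^2|x|=0$: the principal part of the first slot of $\mathcal{P}\circ\mathcal{P}^*$ is $2\Delta_\bmh\Delta_\bmh$, whose flat model annihilates $\chi$, so the would-be leading contribution to the source cancels.

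Next I would show the source lies in the correct weighted space. Writing $\Delta_\bmh\Delta_\bmh\chi=(\Delta_\bmh\Delta_\bmh-\Delta_\bmdelta^2)\chi$ and expanding, every surviving term pairs a factor $h_{ij}-\delta_{ij}=o_\infty(|x|^{-1/2})$, a Christoffel symbol, or a curvature term against derivatives of $\chi=O(|x|)$; with the Decay Assumptions \ref{Decay:Assumptions} and the Multiplication Lemma this places $(f,f^i)$ in $H^0_{\delta-4}$ with $\delta=1/2$. I would then invoke the Fredholm alternative (Proposition of Subsection \ref{Section:Ellipticity}) for the order-four asymptotically homogeneous elliptic operator $\mathcal{P}\circ\mathcal{P}^*$: a solution $(\vartheta,X^i)\in H^4_{1/2}$ exists if and only if $(f,f^i)$ is $L^2$-orthogonal to the cokernel. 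Since $\mathcal{P}\circ\mathcal{P}^*$ is self-adjoint, the cokernel equals its kernel in $H^0_{1/2}$; by elliptic regularity this kernel lies in $H^\infty_{1/2}$, and by Lemma \ref{Lemma:StationarySolutions} it coincides with $\ker\{\mathcal{P}^*:H^\infty_{1/2}\to H^\infty_{-3/2}\}$, i.e. with the space of stationary KIDs.

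I would then evaluate the orthogonality condition through the boundary. For a cokernel element $(v,v^i)$ one has $\mathcal{P}^*(v,v^i)=0$, so applying \eqref{IntegrationByParts} with $(\gamma_{ij},q_{kij})=\mathcal{P}^*(\chi,0)$ collapses the bulk integral and leaves only the flux $\oint_{\partial\mathcal{S}_\infty}n^k(\mathcal{A}_k+\mathcal{B}_k+\mathcal{C}_k+\mathcal{D}_k)\,\mathrm{d}S$, which is linear in the $\lambda_{(k)}$. The flat leading computation, using $\partial_i\partial_j|x|=\delta_{ij}/|x|-x_ix_j/|x|^3$ so that $\mathcal{A}_k$ carries a nonvanishing $O(|x|^{-2})$ contribution, identifies the coefficients as nonzero multiples of the asymptotic translation constants $\mathfrak{A}^0_{(k)}$ of the cokernel KID. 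This yields the dichotomy. For \emph{non-stationary} data the cokernel is trivial (Lemma \ref{Lemma:StationarySolutions}), the orthogonality is vacuous, and a solution exists for the chosen $\chi$; since the kernel is likewise trivial, $\vartheta$ is uniquely determined by the $\lambda_{(k)}$ and the only solution with all $\lambda_{(k)}=0$ is the trivial one, so a nontrivial approximate KID must have $\lambda_{(k)}\neq0$. For \emph{stationary} data the genuine stationary KID $(N,N^i)\in H^\infty_{1/2}$ already solves $\mathcal{P}\circ\mathcal{P}^*(N,N^i)=0$ with all $\lambda_{(k)}=0$, realising the other alternative, and its uniqueness up to scaling is exactly Proposition \ref{Proposition:AsymptoticBehaviourKIDs}.

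The main obstacle I anticipate is analytic rather than structural: the weight $\delta=1/2$ is \emph{critical} for this operator, so both the source estimate $(f,f^i)\in H^0_{-7/2}$ and the boundary-flux evaluation are borderline, the dominant remainder being only $o(|x|^{-7/2})$. I would circumvent this by first solving the Fredholm problem at a slightly relaxed weight $\delta\in(1/2,1)$, where the source sits strictly inside $H^0_{\delta-4}$, and then sharpening the decay of $\vartheta$ to the stated $\lambda_{(k)}|x|+o_\infty(|x|^{1/2})$ form via the indicial and asymptotic analysis underlying Proposition \ref{Proposition:AsymptoticBehaviourKIDs}. A second delicate point is the multi-end bookkeeping: one must verify that the flux identity ties the independent coefficients $\lambda_{(k)}$ together so that the solution is genuinely unique up to a \emph{single} overall scale and that none of the $\lambda_{(k)}$ can vanish in the non-stationary case, while checking that the $K$-dependent contributions $H_{ij}$ and $F_{ij}$ to the boundary integrands decay fast enough not to disturb the leading flux.
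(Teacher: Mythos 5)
Your proposal follows essentially the same route as the paper's own proof: the Ansatz $X=\lambda|x|+\vartheta$ reducing the problem to an inhomogeneous equation, verification that the source lies in $H^0_{-7/2}$, the Fredholm alternative with the cokernel identified as the space of stationary KIDs via Lemma \ref{Lemma:StationarySolutions}, the boundary-flux evaluation of the orthogonality condition yielding an obstruction proportional to $\lambda\,\mathfrak{A}^0$ (the paper computes it to be $-16\pi\lambda\mathfrak{A}^0$), and the linear-combination argument for uniqueness. Your two refinements --- the cut-off localisation $\chi=\sum_k\lambda_{(k)}\psi_{(k)}|x|$ for several ends, and solving first at a relaxed weight $\delta\in(1/2,1)$ before sharpening --- address borderline points (the critical weight and the multi-end bookkeeping) that the paper passes over silently, so they are welcome additions rather than deviations from its argument.
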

\begin{proof}
Substituting the above Ansatz in
equation \eqref{ApproximateKIDEquation} one obtains
\begin{equation}
\mathcal{P}\circ \mathcal{P}^* 
\left(
\begin{array}{c}
\vartheta \\
X^i
\end{array}
\right) =
-\mathcal{P}\circ \mathcal{P}^* 
\left(
\begin{array}{c}
\lambda |x| \\
0
\end{array}
\right).
\label{InhomogeneousEquation}
\end{equation}
Under the Decay Assumptions \ref{Decay:Assumptions}, a lengthy
computation shows that
\begin{eqnarray*}
&H_{ij}= o(|x|^{-2}), \qquad F_{ij} = o(|x|^{-1/2}), \qquad Q_{ij} =
  o(|x|^{-5/2}), \qquad \bar{\gamma}_{ij} = o(|x|^{-1}),&\\
& F_i = o(|x|^{-7/2}), \qquad H =o(|x|^{-4}),&
\end{eqnarray*}
where, in particular, it has been used that
\[
\partial_\alpha |x| = \frac{x_\alpha}{|x|} = O(1),
\qquad \partial_\alpha\partial_\beta |x| =
\frac{\delta_{\alpha\beta}}{|x|}- \frac{x_\alpha x_\beta }{|x|^3}
=O(|x|^{-1}).
\]
Hence, 
\begin{eqnarray*}
&& 2\Delta_\bmh\Delta_\bmh X-r^{ij}D_iD_jX+2r\Delta_\bmh X+D^iD^j
   H_{ij}-\Delta_\bmh H_k{}^k+\bar{H}\\
&& \hspace{3cm} +\tfrac{3}{2}D^i
   rD_iX+(\tfrac{1}{2}\Delta_\bmh r+r_{ij}r^{ij})X-r^{ij}H_{ij} =o(|x|^{-7/2}),\\
&& D^j\Delta_\bmh F_{ij}-D_i\Delta_\bmh F_k{}^k-\bar{F}_i = o(|x|^{-7/2}).
\end{eqnarray*}
so that
\[
\mathcal{P}\circ \mathcal{P}^* 
\left(
\begin{array}{c}
\lambda |x| \\
0
\end{array}
\right) \in H^0_{-7/2}.
\]
To prove the existence of solutions to equation
\eqref{InhomogeneousEquation} we make use of the Fredholm
alternative in weighted Sobolev spaces, according to which equation
\eqref{InhomogeneousEquation} will have solution $(\vartheta,X^i)$ if and only
if its right-hand-side is $L^2$-orthogonal to $\text{coker}\lbrace\mathcal{P}\circ\mathcal{P}^*:H^4_{1/2}\rightarrow H^0_{1/2}\rbrace $ ---i.e. if and only if
\[
\int_{\mathcal{S}} \mathcal{P}\circ \mathcal{P}^* 
\left(
\begin{array}{c}
\lambda |x| \\
0
\end{array}
\right) \cdot
\left(
\begin{array}{c}
N \\
N^i
\end{array}
\right) \mbox{d}\mu =0
\]
for all $(N,N^i)\in H^0_{1/2}$ for which
\[
\mathcal{P}\circ \mathcal{P}^* 
\left(
\begin{array}{c}
N \\
N^i
\end{array}
\right) =0.
\]
From Lemma \ref{Lemma:StationarySolutions} we know that this equation
has non-trivial solutions (i.e. that $\text{coker}\lbrace\mathcal{P}\circ\mathcal{P}^*:H^4_{1/2}\rightarrow H^0_{1/2}\rbrace $ will be non-trivial) if and only if $(\mathcal{S},h_{ij},K_{ij})$
is stationary. Thus, if the initial data set is not stationary, the Fredholm alternative guarantees a solution $(\vartheta, X^i)$ to \eqref{InhomogeneousEquation}. 
\\

For the stationary case, the cokernel is spanned by a single Killing vector with components $(N,N_i)$, taking the form of (\ref{KillingVectorAsymptotics}). Let
\[\left(\begin{array}{c}
\Gamma_{ij}\\
Q_{kij}
\end{array}\right)\equiv \mathcal{P}^*\left(\begin{array}{c}
\lambda\vert x\vert\\
0
\end{array}\right)=\left(\begin{array}{c}
 \lambda(D_i D_j\vert x\vert -\vert x\vert r_{ij}-\Delta\vert x\vert h_{ij})+H_{ij}\\
D_k F_{ij}
\end{array}\right)\]
where, now, 
\begin{eqnarray*}
&& H_{ij}\equiv
  2\lambda\vert x\vert(K^k{}_iK_{jk}-KK_{ij})=o(\vert x\vert^{-2}),\\
&& F_{ij}\equiv 2\lambda\vert x\vert(Kh_{ij}-K_{ij})=o(\vert x\vert^{-1/2}).
\end{eqnarray*}
It then follows that
\[\Gamma_{ij}=o(\vert x\vert^{-1}),\qquad Q_{kij}=o(\vert x\vert^{-3/2})\]
and that 
\[\mathcal{P}\circ\mathcal{P}^*\left(\begin{array}{c}
\lambda\vert x\vert\\
0
\end{array}\right)=\mathcal{P}\left(\begin{array}{c}
\Gamma_{ij}\\
Q_{kij}
\end{array}\right)=o(\vert x\vert^{-7/2})\]
Then, using the identity (\ref{IntegrationByParts}) and the fact that, by assumption, $\mathcal{P}^*(N,N^i)=0$, we see that 
\begin{equation}
\int_\mathcal{S} \mathcal{P}\circ\mathcal{P}^*\left(\begin{array}{c}
\lambda\vert x\vert\\
0
\end{array}\right)\cdot \left(\begin{array}{c}
N\\
N_i
\end{array}\right)d\mu=\oint_{\partial\mathcal{S}_\infty}n^k(\mathcal{A}_k+\mathcal{B}_k+\mathcal{C}_k+\mathcal{D}_k)dS \label{IntegrationByPartsForKillingVector}
\end{equation}
where, here
\begin{eqnarray*}
&& \mathcal{A}_k \equiv N D^j \Gamma_{jk} - D^j N \Gamma_{jk} - D_k N
\Gamma -N D_k \Gamma,\\
&& \mathcal{B}_k \equiv 2(K^{ij}Q_{kij}-K Q_{kj}{}^j)N, \\
&& \mathcal{C}_k \equiv N^i D^l Q_{lik} - D^j N^i Q_{kij} + D_i N^i Q_{kj}{}^j -
   N_i D^l Q_{lj}{}^j, \\
&& \mathcal{D}_k \equiv N^i K_i{}^l \Gamma_{kl} -\tfrac{1}{2}N_k K^{lj}
   \Gamma_{jl} - \tfrac{1}{2}N^iK_{ik}\Gamma. 
\end{eqnarray*}
and $\Gamma\equiv h^{ij}\Gamma_{ij}$. We find then that
\[\mathcal{B}_k=o(\vert x\vert^{-3}),\qquad \mathcal{C}_k=o(\vert x\vert^{-5/2}),\qquad \mathcal{D}_k=o(\vert x\vert^{-5/2})\]
and 
\[\mathcal{A}_k=-4\lambda\mathfrak{A}^0\vert x\vert^{-2} n_k+o(\vert x\vert^{-5/2})\]
Therefore, the only contribution to the right-hand-side of (\ref{IntegrationByPartsForKillingVector}) is the following
\[\oint_{\partial\mathcal{S}_\infty}n^k\mathcal{A}_k dS=-4\lambda\mathfrak{A}^0\oint_{\partial\mathcal{S}_\infty}\vert x\vert^{-2} dS=-16\pi\lambda\mathfrak{A}^0 \]
  Since $\mathfrak{A}^0\neq 0$, we see that in the stationary case we
  have an obstruction to solving (\ref{InhomogeneousEquation}), unless
  $\lambda=0$, in which case $(\vartheta,X_i)=(N,N_i)$ is the unique
  solution, up to constant rescaling. 
  
  Finally, in the non-stationary case, uniqueness of the solution
$(X,X^i)$ follows by an argument analogous to that of Lemma
4. Suppose, for contradiction, that we have two linearly-independent
solutions $(\lambda|x|+\vartheta,X^i)$,
$(\tilde{\lambda}|x|+\tilde{\vartheta},\tilde{X}^i)$. Since the initial
data is by assumption non-stationary, then $\lambda\neq 0$ and
$\tilde{\lambda}\neq 0$. Hence, taking the appropriate linear
combination we arrive at a non-trivial approximate KID with lapse and
shift in $H_{1/2}^\infty$. This contradicts the conclusions of Lemma
4. Hence we conclude that $(\lambda|x|+\vartheta,X^i)$ and
$(\tilde{\lambda}|x|+\tilde{\vartheta},\tilde{X}^i)$ are linearly
dependent ---i.e. the solution is unique up to constant rescaling.
  

\end{proof}
\begin{remark}
{\em The fact that $(\vartheta,N^i)\in H^\infty_{1/2}$ in the previous theorem follows from an application of Proposition \ref{Prop:EllipticRegularity} to equation \eqref{InhomogeneousEquation}. }
\end{remark}
\begin{remark}
{\em In \cite{Dai04c}, the invariant $\lambda$ is also given as a bulk integral as follows
\[\lambda=\frac{1}{16\pi}\int_{\mathcal{S}}X r_{ij}r^{ij}d\mu\]
The above integral formula, valid only in the time-symmetric case, is derived from the following boundary integral
\[\lambda=-\frac{1}{8\pi}\oint_{\partial\mathcal{S}_\infty}n^k D_k\Delta\left(\lambda\vert x\vert\right)dS, \]
through the use of the divergence theorem, substitution using the approximate KID equation and integration-by-parts. A similar calculation yields a more general formula, valid for all data sets satisfying Decay Assumption \ref{Decay:Assumptions}, in terms of both the lapse and the shift of the approximate KID set. The expression is however rather complicated, and so in the interest of conciseness it is not presented here. Nevertheless, it reduces to the above formula when time-symmetry is assumed. Further study of the integral formula is deferred to subsequent work. 
}
\end{remark}

\section{The geometric invariant in conformally flat initial data sets}
\label{Section:FurtherAnalysis}

We have seen in the previous section that an approximate Killing
vector with lapse of the form $\eta=\lambda \vert x\vert+\vartheta$
exists for general asymptotically flat data, and moreover, that the
constant $\lambda$ vanishes if and only if the spacetime development
is stationary. In this section we analyse further the asymptotic
properties of the solutions to the approximate Killing vector equation
in the case of conformally flat initial data sets.  

\subsection{Solutions to the Poisson equation in $\mathbb{R}^3$}
We start with some mathematical preliminaries. Let us assume for the
remainder of this section that $K_{ij}=O(\vert x\vert^{-3+\epsilon})$,
for any $\epsilon>0$. It follows then from the Hamiltonian constraint
that 
\[
r=-K^2+K_{ij}K^{ij}=O(\vert x\vert^{-6+2\epsilon}).
\]
Moreover, the lapse component of the approximate Killing vector
equation can be found to satisfy
\begin{equation}
2\Delta_\bmh\Delta_\bmh\eta-r^{ij}D_i D_j\eta+r_{ij}r^{ij}\eta=O(\vert x\vert^{-11/2+\epsilon}).\label{EqLapseForConformallyFlatData}
\end{equation}
As is well known, the harmonic functions on $\mathbb{R}^3$ are spanned
by functions of the forms
\[
Q_{\alpha_1\cdots \alpha_k}x^{\alpha_1}\cdots x^{\alpha_k},\qquad \frac{Q_{\alpha_1\cdots \alpha_k}x^{\alpha_1}\cdots x^{\alpha_k}}{\vert x\vert^{2k+1}},\qquad\qquad k=0,1,2,\ldots,
\]
where $Q_{\alpha_1\cdots \alpha_k}$ are symmetric trace-free tensors
with constant coefficients. The following result will prove
useful:

\begin{lemma}[\textbf{\em Meyers, \cite{Mey63}}]
Let $\delta$ denote the flat 3--metric and $G=O(\vert x\vert^{-2-p-\epsilon}(\ln\vert x\vert)^q)$ a H\"{o}lder continuous function. Then the equation
\begin{equation}
\Delta_\delta V=G \label{InhomogenoeusLaplaceEq}
\end{equation}
admits  a solution $V^\star$ satisfying
\[V^\star(x)=\begin{cases}
O(\vert x\vert^{-p-\epsilon}(\ln\vert x\vert)^q) \qquad \text{if}~~ 0<p<1~~\text{or}~~ \epsilon>0,\\[1em]
O(\vert x\vert^{-p}(\ln\vert x\vert)^{q+1}) \qquad \text{otherwise.}
\end{cases}\]
\label{LemmaMeyers}
\end{lemma}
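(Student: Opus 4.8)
The plan is to construct $V^\star$ explicitly as the Newtonian potential of $G$ and then read off its fall-off. Since the fundamental solution of the flat Laplacian on $\mathbb{R}^3$ is $-1/(4\pi|x|)$, I set
\[
V^\star(x) = -\frac{1}{4\pi}\int_{\mathbb{R}^3}\frac{G(y)}{|x-y|}\,\mbox{d}^3 y .
\]
The decay hypothesis $G=O(|x|^{-2-p-\epsilon}(\ln|x|)^q)$ forces $G$ to decay strictly faster than $|x|^{-2}$, so this integral converges absolutely, while the H\"older continuity of $G$ guarantees, by the classical Kellogg--Schauder theory of the Newtonian potential, that $V^\star\in C^2$ and that $\Delta_\delta V^\star=G$. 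Hence existence and regularity are immediate, and the entire content of the lemma lies in the decay estimate.

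To obtain the sharp fall-off I would pass to a spherical-harmonic decomposition, writing $G=\sum_{l}G_l(r)Y_l$ and $V^\star=\sum_l V_l(r)Y_l$, so that mode by mode the equation becomes the Euler-type ODE
\[
V_l'' + \frac{2}{r}V_l' - \frac{l(l+1)}{r^2}V_l = G_l ,
\]
whose homogeneous solutions are $r^{l}$ and $r^{-l-1}$. Variation of parameters then gives the decaying particular solution
\[
V_l(r) = -\frac{r^l}{2l+1}\int_r^\infty s^{1-l}G_l(s)\,\mbox{d}s \;-\; \frac{r^{-l-1}}{2l+1}\int_{c_l}^{\,r} s^{2+l}G_l(s)\,\mbox{d}s ,
\]
where the lower limit $c_l\in\{0,\infty\}$ is chosen in each mode to maximise the decay at infinity. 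Inserting $G_l=O(s^{-2-p-\epsilon}(\ln s)^q)$ and evaluating the two radial integrals shows that, \emph{provided the exponents do not coincide}, every mode decays like $r^{-p-\epsilon}(\ln r)^q$; summing the modes (and controlling the tail of the series, using that the higher multipoles decay faster) yields the clean estimate $O(|x|^{-p-\epsilon}(\ln|x|)^q)$. This covers the first branch of the dichotomy, where the condition $0<p<1$ or $\epsilon>0$ is precisely what keeps $p+\epsilon$ away from the dangerous values.

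The main obstacle is exactly the resonant case recorded by the second branch. The integrals above develop a logarithmic divergence when the decay exponent of $G_l$ coincides with one of the homogeneous exponents, i.e. when $l+1=p+\epsilon$ is a positive integer; at that crossover neither choice of $c_l$ avoids the marginal integral
\[
\int^{\,r} s^{-1}(\ln s)^q\,\mbox{d}s \;\sim\; \frac{(\ln r)^{q+1}}{q+1},
\]
so the weight $(\ln|x|)^q$ is upgraded to $(\ln|x|)^{q+1}$ and the sharp $\epsilon$-gain is lost, giving $O(|x|^{-p}(\ln|x|)^{q+1})$. The delicate bookkeeping is thus to identify the single obstructed mode, treat it separately, and absorb the remaining (faster-decaying or non-resonant) modes into the generic estimate; an equivalent, more elementary but less transparent route is to split the potential integral into the regions $|y|\le\tfrac12|x|$, $\tfrac12|x|<|y|<2|x|$, and $|y|\ge 2|x|$ and balance the three contributions, the monopole resonance at $p+\epsilon=1$ reappearing there as a logarithmically divergent radial integral. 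Either way, the existence is routine and the only real work is the sharp tracking of the power of the logarithm across the resonance.
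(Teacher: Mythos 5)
You should first note that the paper contains no proof of this lemma to compare against: it is imported verbatim from Meyers \cite{Mey63}. Judged against the classical literature, your strategy is the standard one --- spherical--harmonic decomposition, variation of parameters with the base point of each radial integral placed at $0$ or $\infty$, and identification of the resonance --- and your mode-wise computations check out: for a source $G_l=s^{-2-\sigma}$ (writing $\sigma\equiv p+\epsilon$) your formula returns $V_l=\bigl[(\sigma-l-1)(\sigma+l)\bigr]^{-1}r^{-\sigma}$, which is the correct particular solution, and at the resonant value $\sigma=l+1$ the marginal integral $\int^r s^{-1}(\ln s)^q\,\mbox{d}s$ does produce the $(\ln r)^{q+1}$ of the second branch.

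The genuine gap is the summation step, which you justify by ``the higher multipoles decay faster''. Under the stated hypotheses this is false: $G$ is only H\"older continuous, so its coefficients $G_{lm}(r)$ have no decay in $l$ beyond the Parseval bound $\sum_m|G_{lm}(r)|^2\le\|G(r\,\cdot)\|^2_{L^2(S^2)}$; decay of multipoles in $l$ would require angular smoothness you are not given. As written the passage from mode-wise estimates to a pointwise bound on $V^\star=\sum_{l,m}V_{lm}Y_{lm}$ fails, since $\|Y_{lm}\|_\infty$ can be as large as $\sqrt{(2l+1)/4\pi}$ and there are $2l+1$ values of $m$, so uniform-in-$(l,m)$ coefficient bounds alone give the divergent series $\sum_l(2l+1)^{3/2}l^{-2}$. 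What rescues the argument is not decay of $G_{lm}$ but the gain in the solution operator visible in your own formula: $|V_{lm}(r)|\lesssim|(\sigma-l-1)(\sigma+l)|^{-1}\,r^{-\sigma}(\ln r)^q\,\sup_s\bigl[s^{2+\sigma}(\ln s)^{-q}|G_{lm}(s)|\bigr]$, a factor $O(l^{-2})$. Treating the finitely many (possibly resonant) modes $l+1\le\sigma$ separately, and combining the $O(l^{-2})$ gain with Minkowski's integral inequality, Cauchy--Schwarz in $m$ and the addition theorem $\sum_m|Y_{lm}(\omega)|^2=(2l+1)/4\pi$, one gets $|V^\star(r\omega)|\lesssim r^{-\sigma}(\ln r)^q\sum_l(2l+1)^{1/2}l^{-2}<\infty$; one must then still check that the series can be differentiated twice (or invoke elliptic regularity for its distributional limit) so that it actually solves the equation. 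A secondary inaccuracy: your ``more elementary'' three-region splitting of the Newtonian potential proves the lemma only for $p+\epsilon\le1$. When $p+\epsilon>1$ and $\int G\,\mbox{d}^3y\neq0$, the plain potential decays no faster than $|x|^{-1}$ (monopole obstruction), so one must first subtract the finitely many multipole harmonics $Q_{\alpha_1\cdots\alpha_k}x^{\alpha_1}\cdots x^{\alpha_k}/|x|^{2k+1}$, $k\le p+\epsilon-1$ --- exactly the harmonic terms displayed in the Remark following the lemma --- and the resonances at integer values $p+\epsilon\ge2$ then reappear in the subtracted potential, not only the monopole one at $p+\epsilon=1$.
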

\begin{remark}
{\em By linearity of the Poisson equation
  (\ref{InhomogeneousEquation}), any two solutions thereof differ only
  by harmonic terms. In particular, the most general solution $V(x)$ of (\ref{InhomogenoeusLaplaceEq}), assuming  $V=O(\vert x\vert^{r})$ for $r>-p$, is given by 
\[
V(x)=\begin{cases}
V^\star(x)+\sum_{k=\lfloor -r\rfloor}^{\lfloor p-1\rfloor}Q_{\alpha_1\cdots \alpha_k}\displaystyle\frac{x^{\alpha_1}\cdots x^{\alpha_k}}{\vert x\vert^{2k+1}} \qquad \text{if}\quad r<0, \\[1em]
V^\star(x)+\sum_{k=0}^{\lfloor p-1\rfloor}Q_{\alpha_1\cdots \alpha_k}\displaystyle\frac{x^{\alpha_1}\cdots x^{\alpha_k}}{\vert x\vert^{2k+1}}+\sum_{l=0}^{\lfloor r\rfloor}\widehat{Q}_{\alpha_1\cdots\alpha_l}x^{\alpha_1}\cdots x^{\alpha_l} \qquad \text{if}\quad r>0.
\end{cases}\]
for some symmetric, trace-free
$Q_{\alpha_1\cdots\alpha_k},~\widehat{Q}_{\alpha_1\cdots\alpha_l}$
with constant coefficients and where for a real number $p$, $\lfloor
p\rfloor$ denotes the floor of $p$ ---i.e. the largest integer smaller than $p$. It will be useful to note that, for $k\in\mathbb{Z}$,
\[
\Delta_\bmdelta\left(\frac{x^\alpha}{\vert
    x\vert^k}\right)=k(k-3)\frac{x^\alpha}{\vert x\vert^{k+2}}.
\]}
\end{remark}

\subsection{Conformally flat initial data sets}

We consider now maximal conformally-flat data initial data sets,
i.e. collections $(\mathcal{S},h_{ij},K_{ij})$ such that
\[
h_{ij} = \phi^4 \delta_{ij}, \qquad K_{ij}=P_{ij}
\]
where $\phi\rightarrow 1$ as $\vert x\vert \rightarrow\infty$ and
$P_{\alpha\beta}=O(\vert x\vert^{-3+\epsilon})$ is a symmetric, tracefree and
divergence free with respect to the flat metric. It will also prove
convenient to define $\psi_{\alpha\beta}\equiv \phi^2
P_{\alpha\beta}$, in terms of which the Hamiltonian and momentum
constraints take the familiar forms
\begin{subequations}
\begin{eqnarray}
&& \Delta_{\bmdelta}\phi=-\tfrac{1}{8}\phi^{-7}\psi_{\alpha\beta}\psi^{\alpha\beta},\label{LicnerowiczEq}\\
&& \partial^\alpha \psi_{\alpha\beta}=0\label{ConfMomentumConstraint},
\end{eqnarray}
\end{subequations}
where indices are now raised and lowered with respect to the flat
metric, $\delta_{ij}$.  
Then, it follows from (\ref{LicnerowiczEq})
and an application of Lemma \ref{LemmaMeyers} that
\begin{equation}
\phi=1+\frac{2m}{\vert x\vert}+\frac{L_{\alpha}x^\alpha}{\vert x\vert^3}+\frac{A_{\alpha\beta}x^\alpha x^\beta}{\vert x\vert^5}+O\left(\frac{\ln\vert x\vert}{\vert x\vert^{4-2\epsilon}}\right) \label{ConfFactorConformallyFlat}
\end{equation}
for some constant $m$, and constant-coefficient $L_\alpha$, $A_{\alpha\beta}$, which are
independent of the extrinsic curvature $P_{\alpha\beta}$ which contributes only at
order ~$O(\ln\vert x\vert/\vert x\vert^{4-2\epsilon})$.
\medskip

In terms of the flat connection, equation (\ref{EqLapseForConformallyFlatData}) becomes
\begin{align}
&\Delta_\bmdelta\Delta_\bmdelta\eta+A(\phi)^\alpha\partial_\alpha\Delta_\bmdelta\eta+B(\phi)^{\alpha\beta}\partial_\alpha\partial_\beta\eta+B(\phi)\Delta_\bmdelta\eta\nonumber \\
&\qquad\qquad\qquad\qquad\qquad\qquad\qquad+C(\phi)^\alpha\partial_\alpha\eta+D(\phi)\eta=O(\vert x\vert^{-11/2+\epsilon})\label{EqLapseForConformallyFlatDataFlatConnection}
\end{align}
where 
\begin{align*}
& A(\phi)^\alpha \equiv-4\phi^{-1}\partial^\alpha\phi,\\
& B(\phi)^{\alpha\beta} \equiv\phi^{-2}(5\phi\partial^\alpha\partial^\beta\phi-19\partial^\alpha\phi\partial^\beta\phi),\\
& B(\phi) \equiv 13\phi^{-2}\vert\partial\phi\vert^2,\\
& C(\phi)^\alpha\equiv 4\phi^{-3}(12\vert\partial\phi\vert^2\partial^\alpha\phi-5\phi(\partial_\beta\phi)\partial^\beta\partial^\alpha\phi),\\
& D(\phi) \equiv 2\phi^{-4}(6\vert\partial\phi\vert^4-6\phi(\partial^\alpha\phi)(\partial^\beta\phi)\partial_\alpha\partial_\beta\phi+\phi^2(\partial^\alpha\partial^\beta\phi)(\partial_\alpha\partial_\beta\phi)).
\end{align*}

\smallskip
\begin{proposition}
Let $(\mathcal{S},h_{ij},P_{ij})$ be maximal conformally-flat data
with $P_{\alpha\beta}=O(\vert x\vert^{-3+\epsilon})$, then the lapse
of the approximate Killing vector has an asymptotic expansion of the
form
\begin{align}
\eta&=\lambda \vert x\vert+18\lambda m\ln{\vert x\vert}+Q_\alpha\frac{x^\alpha}{\vert x\vert}+Q^{(1)}-104\lambda m^2 \frac{\ln{\vert x\vert}}{\vert x\vert}\nonumber\\
&\qquad\qquad\qquad+\frac{Q^{(2)}}{\vert x\vert}-\frac{1}{4}(23\lambda L_\alpha-26mQ_\alpha)\frac{x^\alpha}{\vert x\vert^2}+Q_{\alpha\beta}\frac{x^\alpha x^\beta}{\vert x\vert^3}+O(\vert x\vert^{-3/2}) \label{AsymptoticExpansionEta}
\end{align}
for some constants $Q^{(1)},Q^{(2)},Q_\alpha,Q_{\alpha\beta}$, where $m$ and $L_\alpha$ are the constants appearing in (\ref{ConfFactorConformallyFlat}).
\end{proposition}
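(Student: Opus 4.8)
The plan is to determine the expansion by an iterative bootstrap applied to the flat-connection form \eqref{EqLapseForConformallyFlatDataFlatConnection} of the lapse equation, using Theorem \ref{MainResult} as the starting point and Lemma \ref{LemmaMeyers} (together with its Remark) to integrate at each stage. From Theorem \ref{MainResult} we already know $\eta = \lambda|x| + \vartheta$ with $\vartheta \in H^\infty_{1/2}$, so that $\vartheta = o(|x|^{1/2})$ with correspondingly decaying derivatives. Using the expansion \eqref{ConfFactorConformallyFlat} of $\phi$, I would first record the asymptotics of the variable coefficients, $A(\phi)^\alpha = O(|x|^{-2})$, $B(\phi)^{\alpha\beta} = O(|x|^{-3})$, $B(\phi) = O(|x|^{-4})$, $C(\phi)^\alpha = O(|x|^{-5})$, $D(\phi) = O(|x|^{-6})$, with explicit leading coefficients dictated by the $2m/|x|$ term of $\phi$. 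The strategy is then to treat \eqref{EqLapseForConformallyFlatDataFlatConnection} as an inhomogeneous biharmonic equation $\Delta_\bmdelta \Delta_\bmdelta \eta = \mathcal{G}[\eta]$, where $\mathcal{G}[\eta]$ gathers the first-order-in-coefficient terms together with the $O(|x|^{-11/2+\epsilon})$ source, which carries the only dependence on $P_{\alpha\beta}$ and will contribute only to the final remainder.

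Each iteration proceeds by substituting the current approximation of $\eta$ into $\mathcal{G}[\eta]$, computing the resulting source to the next relevant order, and inverting $\Delta_\bmdelta\Delta_\bmdelta = \Delta_\bmdelta \circ \Delta_\bmdelta$ through two successive applications of Lemma \ref{LemmaMeyers}: first solve $\Delta_\bmdelta \psi = \mathcal{G}[\eta]$ for $\psi = \Delta_\bmdelta\eta$, then $\Delta_\bmdelta \eta = \psi$. At the first step, inserting $\eta \approx \lambda|x|$ and using $\Delta_\bmdelta|x| = 2/|x|$ together with the $2m/|x|$ term of $\phi$, the terms $-A(\phi)^\alpha\partial_\alpha\Delta_\bmdelta\eta$ and $-B(\phi)^{\alpha\beta}\partial_\alpha\partial_\beta\eta$ are the only contributions at order $|x|^{-4}$ and combine to give a leading source $36\lambda m/|x|^4$; solving $\Delta_\bmdelta\psi = 36\lambda m/|x|^4$ gives $\psi = 18\lambda m/|x|^2$ (since $\Delta_\bmdelta|x|^{-2} = 2|x|^{-4}$), and then $\Delta_\bmdelta\eta = 18\lambda m/|x|^2$ gives the logarithmic term $18\lambda m\ln|x|$ (since $\Delta_\bmdelta\ln|x| = |x|^{-2}$). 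The harmonic freedom left undetermined by Lemma \ref{LemmaMeyers}, restricted to modes compatible with $\vartheta = o(|x|^{1/2})$, supplies the biharmonic zero-modes $Q_\alpha x^\alpha/|x|$ and the constant $Q^{(1)}$; faster-growing harmonics such as the degree-one polynomials are excluded by the bound from Theorem \ref{MainResult}.

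Repeating the procedure, feeding the newly found terms (now including $18\lambda m\ln|x|$, $Q_\alpha x^\alpha/|x|$ and $Q^{(1)}$) back into $\mathcal{G}[\eta]$ and bringing in the subleading $L_\alpha x^\alpha/|x|^3$ part of \eqref{ConfFactorConformallyFlat}, produces in turn the determined terms $-104\lambda m^2\ln|x|/|x|$ and $-\tfrac{1}{4}(23\lambda L_\alpha - 26 m Q_\alpha)x^\alpha/|x|^2$, while the remaining harmonic freedom at these orders yields the free constants $Q^{(2)}/|x|$ and $Q_{\alpha\beta}x^\alpha x^\beta/|x|^3$. The iteration continues until the accumulated remainder reaches $O(|x|^{-3/2})$; at this order the $P_{\alpha\beta}$-dependent source $O(|x|^{-11/2+\epsilon})$, which gains four powers under biharmonic inversion, lands in the remainder, as do the contributions of the $A_{\alpha\beta}$-term of $\phi$. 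Elliptic regularity (Proposition \ref{Prop:EllipticRegularity}) then guarantees that the expansion may be differentiated term by term, so that the remainder estimate is genuine.

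The main obstacle will be the bookkeeping: correctly tracking the order in $|x|$ of every product of a variable coefficient $A(\phi),B(\phi),C(\phi),D(\phi)$ with a derivative of the current approximation of $\eta$, and extracting the exact numerical and tensorial coefficients (the $36$ and $18$, the $-104$, and the combination $23\lambda L_\alpha - 26 m Q_\alpha$). Particular care is needed with the logarithmic terms, which arise precisely because the Poisson sources of the form $c/|x|^2$ fall into the ``otherwise'' branch of Lemma \ref{LemmaMeyers}, and with distinguishing, at each order, the forced (non-biharmonic) contributions from the genuinely free biharmonic zero-modes that become $Q^{(1)}, Q^{(2)}, Q_\alpha, Q_{\alpha\beta}$. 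One must also verify that the harmonic modes retained are consistent with the full fourth-order equation at the next order, so that their coefficients are indeed left free rather than secretly constrained.
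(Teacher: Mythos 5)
Your proposal follows essentially the same route as the paper's proof: substitute the expansion \eqref{ConfFactorConformallyFlat} into the flat-connection equation \eqref{EqLapseForConformallyFlatDataFlatConnection}, insert the Ansatz $\eta=\lambda|x|+\vartheta$ from Theorem \ref{MainResult}, and bootstrap by inverting $\Delta_\bmdelta\Delta_\bmdelta$ through two successive applications of Lemma \ref{LemmaMeyers}, with the decay bounds fixing which harmonic/biharmonic modes survive as the free constants $Q^{(1)},Q^{(2)},Q_\alpha,Q_{\alpha\beta}$; your first-stage computation (the source $36\lambda m/|x|^4$ and the resulting $18\lambda m\ln|x|$ term) agrees with the paper, and the second stage you describe is exactly what the paper carries out (source $624\lambda m^2/|x|^5+(46\lambda L_\alpha-52mQ_\alpha)x^\alpha/|x|^6$, integrated twice). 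The only difference is presentational: the paper works from a fully pre-expanded form of the equation and computes the second-stage coefficients explicitly, which your plan defers to ``bookkeeping,'' but the method and structure are the same.
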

\begin{proof}
Substituting (\ref{ConfFactorConformallyFlat}) into
(\ref{EqLapseForConformallyFlatDataFlatConnection}) one obtains 
\begin{align}
& \Delta_\bmdelta\Delta_\bmdelta\eta+\frac{4}{\vert x\vert^3}\left[2mx^\alpha-4m^2\frac{x^\alpha}{\vert x\vert}-L^\alpha+3L_\beta\frac{x^\beta x^\alpha}{\vert x\vert^2}\right]\partial_\alpha\Delta_\bmdelta\eta \nonumber \\
&\qquad\quad+\frac{15}{\vert x\vert^5}\left[\left(2m-\frac{4m^2}{\vert x\vert}+\frac{5}{\vert x\vert^2}L_\gamma x^\gamma\right)x^\alpha x^\beta-2L^{(\alpha}x^{\beta)}\right]\partial_\alpha\partial_\beta\eta \nonumber\\
&\qquad\quad-\frac{1}{\vert x\vert^3}\left[10m-\frac{72m^2}{\vert x\vert}+\frac{15}{\vert x\vert^2}L_\gamma x^\gamma\right]\Delta_\bmdelta\eta+\frac{160m^2}{\vert x\vert^6}x^\alpha\partial_\alpha\eta+\frac{48m^2}{\vert x\vert^6}\eta =O(\vert x\vert^{-11/2+\epsilon}).\label{EqLapseForConformallyFlatDataFlatConnectionExpanded}
\end{align}
Substituting the Ansatz, $\eta=\lambda \vert x\vert+\vartheta$, where
$\vartheta=o(\vert x\vert^{1/2})$, and collecting lower-order terms in
$\vartheta$
\[
\Delta_\bmdelta\Delta_\bmdelta\vartheta=\frac{36\lambda m}{\vert
  x\vert^4}+O(\vert x\vert^{-9/2}).
\]
Using Lemma \ref{LemmaMeyers}, we obtain
\[
\Delta_\bmdelta\vartheta=\frac{18\lambda m}{\vert
  x\vert^2}-2Q_\alpha\frac{x^\alpha}{\vert x\vert^3}+O(\vert
x\vert^{-5/2})
\]
for some constant-coefficient $Q_\alpha$. Here we have used that
$\Delta_\delta\vartheta=o(\vert x\vert^{-3/2})$, thereby excluding
constant and $1/\vert x\vert$ harmonic terms. Applying Lemma
\ref{LemmaMeyers} again we find that 
\[
\vartheta=18\lambda m\ln{\vert x\vert}+Q_\alpha\frac{x^\alpha}{\vert
  x\vert}+Q^{(1)}+\varphi
\]
for some constant $Q^{(1)}$, and some function $\varphi=O(\vert x\vert^{-1/2})$. Substituting into (\ref{EqLapseForConformallyFlatDataFlatConnectionExpanded}), 
\[
\Delta_\bmdelta\Delta_\bmdelta\varphi=\frac{624 \lambda m^2}{\vert
  x\vert^5}+(46\lambda L_\alpha-52mQ_\alpha)\frac{x^\alpha}{\vert
  x\vert^6}+O(\vert x\vert^{-11/2+\epsilon})
\]
implying that
\[
\Delta_\bmdelta\varphi=\frac{104\lambda m^2}{\vert
  x\vert^3}+\frac{1}{2}(23\lambda
L_\alpha-26mQ_\alpha)\frac{x^\alpha}{\vert
  x\vert^4}+Q_{\alpha\beta}\frac{x^\alpha x^\beta}{\vert
  x\vert^5}+O(\vert x\vert^{-7/2+\epsilon})
\]
for some constant-coefficient, tracefree $Q_{\alpha\beta}$. Here we have used the fact that $\Delta_\bmdelta\varphi=O(\vert x\vert^{-5/2})$ to eliminate constant, $1/\vert x\vert$ and $1/\vert x\vert^2$ harmonic terms. Integrating up once more, we obtain (\ref{AsymptoticExpansionEta}).
\end{proof}

It is interesting to note the presence of a logarithmically-singular term in (\ref{AsymptoticExpansionEta}) in the non-Killing case, $\lambda\neq 0$. On the other hand, if one sets $\lambda=0$ in (\ref{AsymptoticExpansionEta}), then one obtains the asymptotic expansion 
\begin{align}
\eta&=Q_\alpha\frac{x^\alpha}{\vert x\vert}+Q^{(1)}+\frac{Q^{(2)}}{\vert x\vert}+\frac{13}{2}mQ_\alpha\frac{x^\alpha}{\vert x\vert^2}+Q_{\alpha\beta}\frac{x^\alpha x^\beta}{\vert x\vert^3}+O(\vert x\vert^{-3/2}) \label{AsymptoticExpansionEtaKilling}
\end{align}
for the lapse of a general spacetime Killing vector restricted to a conformally flat initial data set.


\section*{Conclusions}
We have shown that the existence of approximate Killing vectors
extends to a large class of asymptotically Euclidean initial data with
non-vanishing extrinsic curvature and non-vanishing ADM 4-momentum. Following Dain's discussion in
\cite{Dai04c}, we can then define a geometric invariant $\lambda_{(k)}$
on each asymptotically Euclidean end given by the leading
coefficient in the appropriate asymptotic expansion. The vanishing of
any one of the $\lambda_{(k)}$ characterises stationarity of the
initial data. 

Further work would involve the construction of
approximate Killing vectors on hyperboloidal hypersurfaces. It would also be of interest to explore the dynamics of the approximate KID. If a propagation equation for the approximate KID can be found, one may be able to use Dain's invariant to quantify the deviation from stationarity of a generic asymptotically Euclidean initial data set.

\section*{Acknowledgements}
This article is dedicated to the memory of Sergio Dain (1969-2016) who
played an important role in shaping the knowledge of the first author
not only on the topic of this article but also on many other aspects of General
Relativity.  We would like to thank L.B. Szabados for clarifying the
issue of units in the construction of the approximate KID equation. We would also
like to thank P.T. Chru\'sciel and R. Beig for helpful comments
on an earlier draft.



\end{document}